\newtheorem{theorem}{Theorem}
\newtheorem{lemma}[theorem]{Lemma}
\newtheorem{corollary}[theorem]{Corollary}
\newtheorem{observation}[theorem]{Observation}
\theoremstyle{definition}
\newtheorem{definition}[theorem]{Definition}
\newcommand{\transpose}{^{\mathsf{T}}}
\newcommand{\symmetrize}{\widehat}
\newcommand{\dehn}{\mathcal{D}}
\newcommand{\symdehn}{\symmetrize{\dehn}}
\newcommand{\rank}{\operatorname{rank}}
\title{Orthogonal Dissection into Few Rectangles}
\author{David Eppstein\thanks{Department of Computer Science, University of California, Irvine. Donald Bren Hall, Irvine, CA 92697, USA. eppstein@uci.edu.}}
\date{ }
\begin{document}
\thispagestyle{empty}
\maketitle  

\begin{abstract}
We describe a polynomial time algorithm that takes as input a polygon with axis-parallel sides but irrational vertex coordinates, and outputs a set of as few rectangles as possible into which it can be dissected by axis-parallel cuts and translations. The number of rectangles is the rank of the Dehn invariant of the polygon. The same method can also be used to dissect an axis-parallel polygon into a simple polygon with the minimum possible number of edges. When rotations or reflections are allowed, we can approximate the minimum number of rectangles to within a factor of two.
\end{abstract}

\section{Introduction}

Problems of rearranging polygonal shapes into simpler shapes, such as orthogonal polygons into rectangles, have many applications in such varied topics as VLSI design, DNA microarray layout, image processing, radiation therapy planning, and robot self-assembly~\cite{Epp-WG-09,Pat-CAD-77,HanHubLip-CT-02,CheIyeKas-TSE-88,FerSanSkl-CVGIP-84,Eng-DAM-09,Kal-EJC-09,LiZha-IROS-05}. One way to do this, but not the only way, is by subdivision. Slicing an orthogonal polygon horizontally through each vertex partitions it into rectangles, but may use more rectangles than necessary. Instead, an algorithm based on bipartite matching can find a partition into a minimum number of rectangles in polynomial time, even for polygons with holes. The algorithm finds axis-parallel segments through pairs of non-convex vertices, constructs a bipartite intersection graph of these segments, and uses the fact that in bipartite graphs, maximum independent sets can be found using maximum matchings. Slicing along a maximum independent set of segments, with additional slices through each non-convex vertex missed by the independent set, produces a set of as few rectangles as possible~\cite{FerSanSkl-CVGIP-84,LipLodLuc-FI-79,Oht-ISCAS-82,Epp-WG-09}.

In this work, we study an analogous problem of rectangle minimization for a different class of rearrangement methods, in which we allow sliced pieces to be rejoined. Slicing a polygon into pieces and rejoining them into another polygon is called \emph{dissection}. Potentially, dissection can produce many fewer rectangles,
but it is not obvious how to choose the dissection operations in such a way to produce as few rectangles as possible.  For example, the Greek cross of \cref{fig:greek-cross} requires three rectangles when partitioned, but has a three-piece dissection into one rectangle, as shown. In fact, every polygon (orthogonal or not) can be dissected into every other polygon of the same area; this is the \emph{Wallace–Bolyai–Gerwien theorem}~\cite{Bol-33,Ger-Crelle-33,Jac-AJM-12,WalLow-NSMR-14}. Therefore, a dissection into a single rectangle always exists. However, this dissection may rotate pieces and use non-axis-aligned cuts, both of which are unnatural for orthogonal polygons. Instead, we ask: if we consider dissections that use only axis-parallel slices, translations, and rejoining of the sliced pieces, without rotations, how few rectangles can we dissect a given shape into? For instance, the figure demonstrates that the answer for the Greek cross is one: it can be dissected into a single rectangle. We call this restricted class of dissections \emph{orthogonal dissections}. As defined, these do not allow $90^\circ$-rotations, but we will also consider an extended class of dissections in which rotation (or equivalently reflection across a diagonal reflection line) is allowed; we call these \emph{orthogonal dissections with rotation}.

\begin{figure}[t]
\centering
\includegraphics[width=\textwidth]{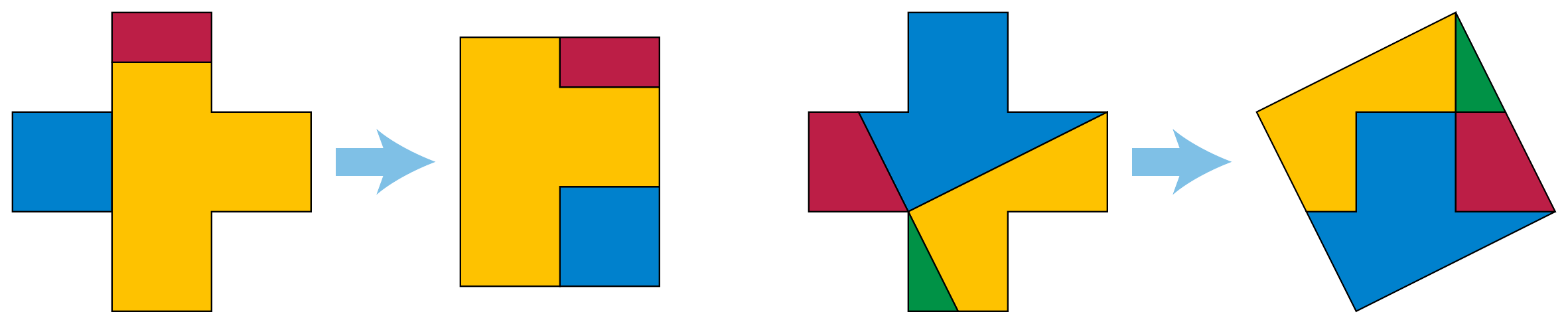}
\caption{Left: Dissection of a Greek cross into a rectangle, using only axis-parallel cuts and translation of pieces. Right: Dissection into a square using non-axis-parallel cuts~\cite{Fre-97}.}
\label{fig:greek-cross}
\end{figure}

Polyominoes (edge-to-edge unions of unit squares), such as the Greek cross of the figure, always have an orthogonal dissection into one rectangle. Simply subdivide a given polyomino into its constituent squares, let $n$ be the number of squares obtained in this way, and rearrange and join them into a $1\times n$ rectangle. To make the problem less trivial, we consider in this work polygons with irrational coordinates. As we will see, for these polygons, an orthogonal dissection into a single rectangle may not exist. To address the computational issues that this entails, we assume that all coordinates are presented as rational linear combinations of a \emph{rational basis}, a set of real numbers for which no nontrivial rational linear combination sums to zero.

A key technical tool that we use for analyzing dissections, following previous work along the same lines, is the \emph{Dehn invariant}. The Dehn invariant is a value living in an infinite-dimensional tensor space, usually used for three-dimensional polyhedral dissection problems. One polyhedron can be dissected into another if and only if they have the same volumes and Dehn invariants, and a polyhedron can be dissected to tile space if and only if its Dehn invariant is zero~\cite{Deh-MA-01,Dup-NTM-01,Jes-MS-68,Syd-CMH-65}. Another version of the Dehn invariant has also been used for orthogonal dissection of rectangles to rectangles, in order to prove that such a dissection exists if and only if the two rectangles have equal areas and rationally related sides~\cite{Ben-AMM-07,Deh-MA-03,Spa-AMM-04,Sti-98}. For instance, because the Greek cross of \cref{fig:greek-cross} (scaled to form a pentomino, with side length one) has an orthogonal dissection into a rectangle with dimensions $2\times 2\,\tfrac12$, it cannot also be orthogonally dissected into a $\sqrt 5\times \sqrt 5$ square. (Instead, it can be dissected into a square using only two straight but not axis-parallel cuts~\cite{Fre-97}.)

\subsection{New results}
Our main result is an algorithm that takes as input an orthogonal polygon with coordinates given in terms of a rational basis, that computes the minimum number of rectangles into which it can be dissected, and that constructs a family of rectangles of that minimum size into which it may be dissected (\cref{thm:minrect-algorithm}). The algorithm runs in time proportional to its input size (the number of points multiplied by the cardinality of the basis), in a model of computation in which rational arithmetic operations take constant time (as detailed in \cref{sec:model}). As we also show, this has strong implications for the possibility of dissecting a polygon into a prototile that can tile the plane: such a dissection exists if and only if the minimum number of rectangles is one or two (\cref{thm:rank-vs-tiling}).

To prove this, we extend the two-dimensional Dehn invariant from rectangles to orthogonal polygons more generally. We show that it is a complete invariant for orthogonal dissection: two orthogonal polygons have a dissection into each other if and only if they have the same Dehn invariant (\cref{thm:dissectability}). The Dehn invariant determines the area of a polygon, and (unlike the polyhedral Dehn invariant) whenever a value in the space of Dehn invariants has a positive area associated with it, it can be realized by an orthogonal polygon (\cref{lem:realizability}).
The key insight leading to our rectangle-minimization algorithm is that, as order-two tensors, Dehn invariants have significant structure beyond merely being equal or unequal to each other or zero. In particular, like matrices, they have a rank, and this rank is geometrically meaningful. We prove that, for the orthogonal Dehn invariant, the rank equals the minimum number of rectangles that can be obtained from an orthogonal dissection (\cref{thm:minrect-rank}). As we show, the same method can also be used to dissect a given axis-parallel polygon into a simple polygon with as few edges as possible (for polygons formed by orthogonal dissection). For a polygon whose Dehn invariant has rank $r$, and whose minimum number of rectangles is $r$, the minimum number of edges equals $2r+2$ (\cref{obs:edges}).

To extend these results to orthogonal dissections with rotation, we define a new symmetrized form of the Dehn invariant that is an invariant for this more general class of dissections. Its rank can differ from the rank of the Dehn invariant for dissections without rotation by an arbitrarily large factor. In this case, we do not have an efficient algorithm for the minimum number of rectangles into which we can dissect a given polygon, but we can describe a formula for it in terms of the minimum rank in a family of matrices having the same symmetrization. Additionally, as we show, the rank of the symmetrized Dehn invariant approximates the minimum number of rectangles to within a factor of two, providing a polynomial time approximation algorithm with approximation ratio two.

For the Dehn invariant of polyhedra, we do not have as precise a relation, but the rank of the Dehn invariant (if nonzero) provides a lower bound on the minimum number of edges in a polyhedron to which the given polyhedron can be dissected, and also on the minimum number of tetrahedra into which it can be dissected.

\section{Model of computation}
\label{sec:model}

The main objects of study in this work are the following:
\begin{definition}
We define an orthogonal polygon to be a bounded region of the plane whose boundary consists of finitely many axis-parallel line segments, allowing polygons with holes. We do not generally require this region to be connected or simply-connected; when we do, we call it a \emph{simple polygon}. A \emph{vertex} of a polygon is an endpoint of one of these line segments.
\end{definition}

We may represent these by specifying the coordinates of each vertex. Because the problems we consider are nontrivial only for polygons with irrational  coordinates, it is necessary to say something about how those coordinates are represented and how we compute with them.

\begin{definition}
We define a rational basis to be a system of finitely many real numbers that is linearly independent over $\mathbb{Q}$.
This means that, if a linear combination of basis elements with rational-number coefficients adds to zero, all coefficients must be zero.
\end{definition}

This is just the standard notion of a basis in linear algebra, applied to systems of real numbers that form vector spaces over the rational numbers. At most one member of a rational basis can be a rational number, because any two rational numbers $p$ and $q$ have a rational combination $\tfrac{1}{p}p-\tfrac{1}{q}q$ summing to zero. In general we allow either different bases for the $x$-coordinates and the $y$-coordinates (an $x$-basis and a $y$-basis) or a single combined basis; when we consider dissections with rotation, a combined basis will be more convenient. We require each vertex coordinate of a given orthogonal polygon to be a rational linear combination of basis elements, represented as a vector of rational-number coefficients, one for each basis element. The size of the input is the number of rational coefficients needed to describe all of the polygon vertices: the product of the number of vertices with the sum of the sizes of the $x$-basis and $y$-basis.

To compute the minimum number of rectangles in an orthogonal dissection, no additional information about the basis elements is necessary. Our algorithm for this version of the problem uses only rational-number arithmetic, and performs a polynomial number of arithmetic operations: essentially, only Gaussian elimination applied to a matrix whose coefficients are quadratic combinations of input coefficients. However, we need additional assumptions that allow computation with basis elements in order to verify that the input describes a polygon without edge crossings, or to construct the rectangles into which it can be dissected. To do these things, we need the following additional primitive operations:
\begin{itemize}
\item Find the sign of a rational combination of basis elements (that is, determine whether this value is positive, negative, or zero), or the sign of a rational combination of products of $x$-basis elements and $y$-basis elements.
\item Given any two rational combinations of basis elements, or of products of $x$-basis elements and $y$-basis elements, find a rational number between them.
\end{itemize}

We are not aware of past use of this specific computational model. However, exact computation using algebraic numbers is common in computational geometry implementation libraries~\cite{MehSch-SAMVM-01,KarFukvdH-ICMS-10}, and it is standard to represent such numbers as rational combinations of roots of a Galois polynomial, a special case of a rational basis. We have chosen the model described above in order to be able to state our results in a more general model that does not specify the algebraic nature of the numbers. In this way, the algorithms can apply as well to coordinates involving transcendental numbers such as $\pi$ and~$e$, as long as the primitive operations are available for these coordinates.

\section{The orthogonal Dehn invariant}
\label{sec:matrix}

The key tool for our results on orthogonal polygon dissection is the Dehn invariant, which we first define in a basis-specific way
as follows.

\begin{definition}
In terms of the given rational basis, the Dehn invariant $\dehn(P)$ of an orthogonal polygon $P$ can be described as a matrix of rational numbers,
with rows indexed by $y$-basis elements and columns indexed by $x$-basis elements, constructed as follows:
\begin{itemize}
\item Express the given polygon as a linear combination of rectangles $R_i$. For instance, if coordinate comparisons are available, we may slice the polygon horizontally through each non-convex vertex. If comparisons are unavailable, we may instead choose the line through one horizontal side as a base and consider the family of signed rectangles between each other horizontal side and this base line.
\item Express the width $w_i$ and height $h_i$ of each rectangle $R_i$ as a linear combination of basis elements with rational coefficients. The width is the difference of $x$-coordinates of right and left sides of the rectangle, the height is the difference of $y$-coordinates of top and bottom sides, and the difference of two linear combinations of basis elements produces another linear combination.
\item Construct a matrix $M_i$, the outer product of the expressions for $w_i$ and $h_i$. The coefficient of this matrix, for the column corresponding to an $x$-basis element $x_j$ and the row corresponding to a $y$-basis element $y_k$, is a rational number, the product of the coefficient of $x_j$ in $w_i$ and the coefficient of $y_k$ in $h_i$.
\item The Dehn invariant of the polygon is the sum of matrices $\sum_i M_i$.
\end{itemize}
\end{definition}

\begin{figure}[t]
\centering\includegraphics[width=0.8\columnwidth]{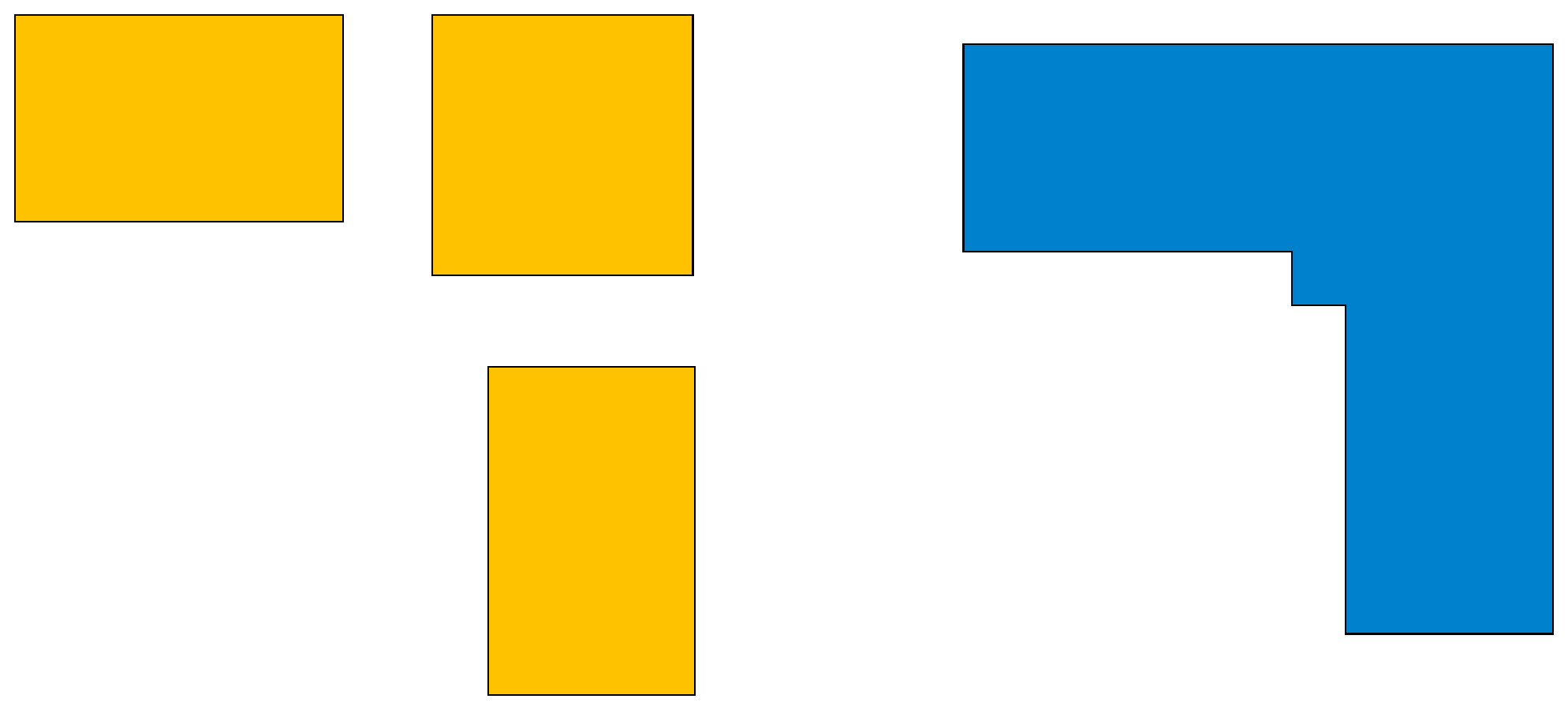}
\caption{Three rectangles with dimensions $2^{2/3}\times 1$, $2^{1/3}\times 2^{1/3}$, and $1\times 2^{2/3}$ (yellow), and a polygon formed by gluing them together (blue)}
\label{fig:trirect}
\end{figure}

For example, for the blue polygon in \cref{fig:trirect} and the rational basis $\{1,2^{1/3},2^{2/3}\}$, this definition would yield as the Dehn invariant the matrix
\[\begin{pmatrix} 0 & 0 & 1 \\ 0 & 1 & 0 \\ 1 & 0 & 0 \end{pmatrix},\]
as can be seen from its dissection into yellow rectangles in the figure. Each yellow rectangle has a width and a height that is one of the basis elements, so it contributes a single 1 coefficient to the total. The sum of the three 1 coefficients, from the three yellow rectangles, is the matrix above.

Instead of using a specific basis, one can describe the same thing in a basis-free way by writing that the Dehn invariant is an element of the tensor product of $\mathbb{Q}$-vector spaces $\mathbb{R}\otimes_{\mathbb{Q}}\mathbb{R}$, and can be determined as a sum of elements of this tensor product:\footnote{The Dehn invariant is often written as an element of a tensor product of abelian groups, rather than of vector spaces, using the notation $\mathbb{R}\otimes_{\mathbb{Z}}\mathbb{R}$ or, for the polyhedral invariant, $\mathbb{R}\otimes_{\mathbb{Z}}\mathbb{R}/\mathbb{Z}$. The group notation makes more sense for some contexts; for instance, it works for the polyhedral invariant in hyperbolic or spherical geometry, where linear scaling of polyhedra is not possible. But for our use of tensor rank, vector space notation is more convenient. For the equivalence of matrices and tensors see~\cite{Hac-12}.}
\[\dehn(P)=\sum_i h_i\otimes w_i.\]
It is an invariant of $P$, in the sense that its value (either thought of as a matrix for a specific basis or as a tensor) does not depend on the decomposition into rectangles used to compute it, and remains unchanged under orthogonal dissections; see \cref{sec:invariance}.

In contrast to the polyhedral Dehn invariant, the area of an orthogonal polygon $P$ can be recovered from its Dehn invariant under any basis, as the sum
\[\sum_j \sum_k \dehn(P)_{kj} x_j y_k\]
of products of matrix coefficients, $x$-basis elements, and $y$-basis elements. In this sense, it is meaningful to speak of the area of a Dehn invariant, rather than the area of a polygon.

\section{Invariance of the Dehn invariant}
\label{sec:invariance}

Previous works on the orthogonal Dehn invariant only appear to have considered it with regard to rectangles, rather than for orthogonal polygons more generally~\cite{Ben-AMM-07,Deh-MA-03,Spa-AMM-04,Sti-98}. Generalizing this past work, we prove here that it is an invariant of orthogonal polygons under dissection. Throughout this section, we use the abstract tensor space formulation of the orthogonal Dehn invariant; everything carries directly over to the formulation in any particular basis, according to standard principles on the invariance of linear algebra under different choices of basis.

\begin{lemma}
\label{lem:grid}
Let $R$ be a rectangle with height $h$ and width $w$. Suppose $R$ is subdivided arbitrarily by vertical and horizontal lines into a rectangular grid of smaller rectangles of heights $h_j$ and widths $w_k$, as depicted in \cref{fig:grid}. For all such subdivisions, $h\otimes w=\sum h_j\otimes w_k$.
\end{lemma}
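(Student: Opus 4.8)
The plan is to exploit bilinearity of the tensor product to reduce the two-dimensional claim to two one-dimensional telescoping sums. First I would set up notation: suppose the horizontal lines cut $R$ at heights that partition $h$ into consecutive strips of heights $h_1,\dots,h_m$, so that $h=\sum_j h_j$, and the vertical lines partition $w$ into $w_1,\dots,w_n$ with $w=\sum_k w_k$. (Here each $h_j$ and $w_k$ is a real number, namely a difference of consecutive grid coordinates; no basis is needed for the abstract statement.) The grid cell in row $j$ and column $k$ is a rectangle of height $h_j$ and width $w_k$, contributing $h_j\otimes w_k$ to the summed invariant, so the right-hand side is $\sum_{j=1}^{m}\sum_{k=1}^{n} h_j\otimes w_k$.

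Next I would compute this double sum using the defining bilinearity relations of $\mathbb{R}\otimes_{\mathbb{Q}}\mathbb{R}$, which in particular give additivity in each argument over $\mathbb{Q}$-linear combinations and hence over finite sums of reals: $\sum_{j}\sum_{k} h_j\otimes w_k = \sum_j \bigl(h_j \otimes \sum_k w_k\bigr) = \sum_j h_j\otimes w = \bigl(\sum_j h_j\bigr)\otimes w = h\otimes w$. That is the entire argument; the two inner applications of additivity are the one-dimensional ``telescoping'' steps, one in the width coordinate and one in the height coordinate.

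Honestly, there is no serious obstacle here — the only thing to be careful about is making sure the additivity of $\otimes$ that we invoke is exactly the one guaranteed by the tensor product over $\mathbb{Q}$ (it is: $a\otimes(b+c)=a\otimes b+a\otimes c$ and $(a+b)\otimes c=a\otimes c+b\otimes c$ hold for all reals, since integer — hence rational — coefficients are allowed), and that the grid really does decompose $h$ and $w$ as claimed sums, which is immediate from the definition of a rectangular grid subdivision. If one prefers the matrix formulation in a fixed basis, the same computation reads: writing $h_j$ and $w_k$ as coordinate vectors, $M_{jk}$ is the outer product $h_j w_k^{\mathsf T}$, and $\sum_{j,k} h_j w_k^{\mathsf T} = \bigl(\sum_j h_j\bigr)\bigl(\sum_k w_k\bigr)^{\mathsf T} = h\, w^{\mathsf T}$, which is the matrix of $h\otimes w$; but since the statement is phrased abstractly, the tensor computation above suffices and the basis version follows by the general correspondence between tensors and matrices already invoked in the excerpt.
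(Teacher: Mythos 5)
Your proposal is correct and matches the paper's proof, which likewise derives the identity immediately from $\sum h_j=h$, $\sum w_k=w$, and the bilinearity of the tensor product; you have simply written out the two additivity steps explicitly.
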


\begin{figure}[t]
\centering\includegraphics[width=0.5\textwidth]{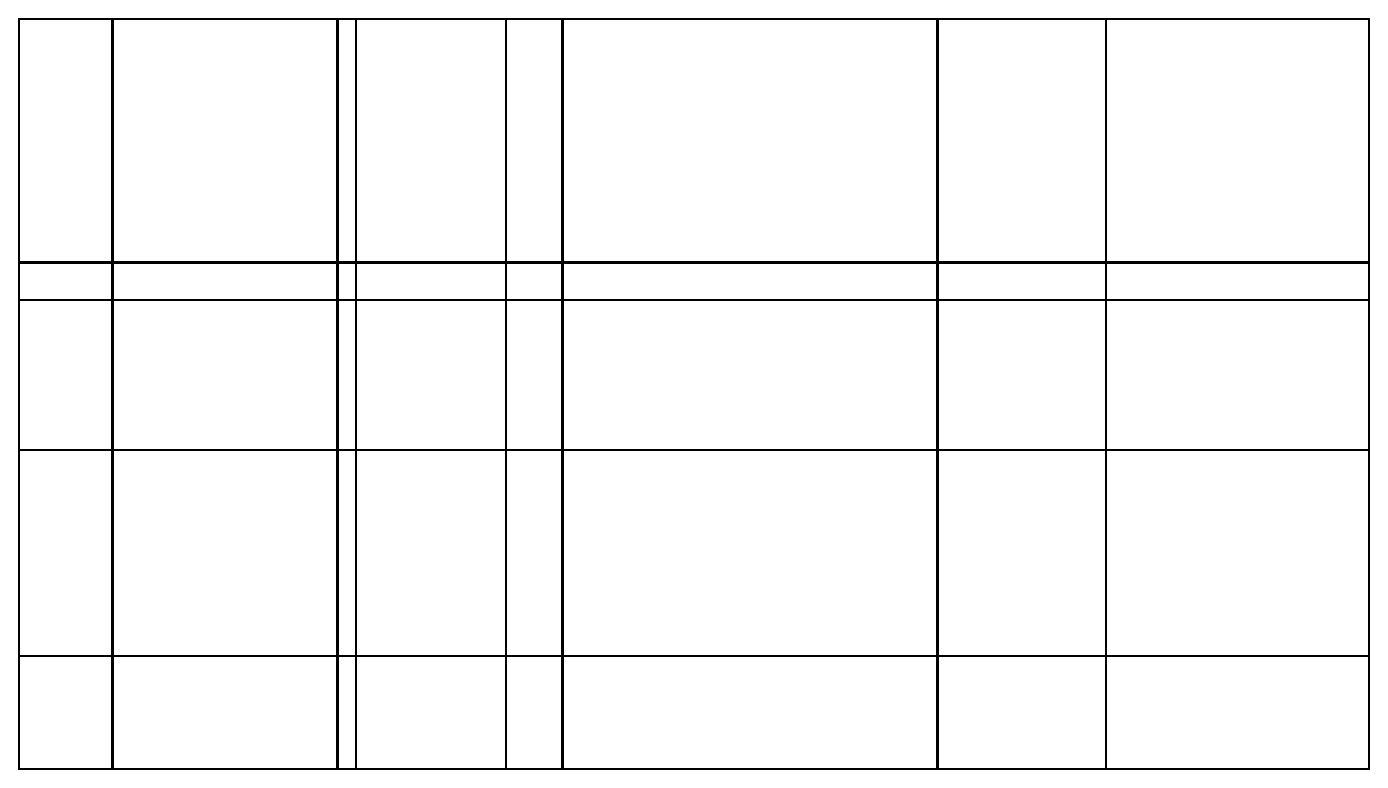}
\caption{Illustration for \cref{lem:grid}: subdividing a rectangle into a grid of smaller rectangles does not change its Dehn invariant.}
\label{fig:grid}
\end{figure}

\begin{proof}
This follows immediately from the facts that $\sum h_j=h$ and that $\sum w_k=w$, and from the bilinearity of tensors.
\end{proof}

\begin{lemma}
\label{lem:invariant-of-polygon}
Let $P$ be any orthogonal polygon. Then regardless of how $P$ is subdivided into rectangles $R_i$ of height $h_i$ and width $w_i$, the value $\sum h_i\otimes w_i$ will be unchanged. That is, $\dehn(P)=\sum h_i\otimes w_i$ is well-defined as an invariant of $P$.
\end{lemma}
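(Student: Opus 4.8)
The plan is to compare any two rectangular subdivisions of $P$ by passing to a common refinement built from a line arrangement, and to reduce each comparison to \cref{lem:grid}. Suppose $P$ is subdivided in two ways: into interior-disjoint rectangles $\{R_i\}$ with heights $h_i$ and widths $w_i$, and into interior-disjoint rectangles $\{R'_j\}$ with heights $h'_j$ and widths $w'_j$. Let $\mathcal{L}$ be the finite set of all horizontal and vertical lines that contain a side of some $R_i$ or some $R'_j$; note that $\mathcal{L}$ also contains every line through a side of $P$ itself, since each boundary segment of $P$ lies on the boundary of some rectangle in either subdivision. The lines of $\mathcal{L}$ partition the plane into open rectangular cells, and because every edge of $\partial P$ lies on a line of $\mathcal{L}$, no cell straddles $\partial P$; hence $P$ is exactly the union of the closures of those cells lying in its interior. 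Enumerate these cells as $Q_1,\dots,Q_N$, where $Q_m$ is a rectangle of height $a_m$ and width $b_m$.

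Next I would argue that $\mathcal{L}$ refines the first subdivision, cell by cell. Fix a rectangle $R_i$. All four of its sides lie on lines of $\mathcal{L}$, and every other line of $\mathcal{L}$ either is disjoint from the interior of $R_i$ or crosses $R_i$ from one side to the opposite side; thus the lines of $\mathcal{L}$ cut $R_i$ into a rectangular grid whose cells are precisely the $Q_m$ contained in $R_i$. Applying \cref{lem:grid} to this grid gives
\[
  h_i\otimes w_i \;=\; \sum_{Q_m\subseteq R_i} a_m\otimes b_m .
\]
Since the $R_i$ are interior-disjoint and their union is $P$, each cell $Q_m$ is contained in exactly one $R_i$, so summing the displayed identity over $i$ yields $\sum_i h_i\otimes w_i = \sum_{m=1}^{N} a_m\otimes b_m$. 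The identical argument applied to the second subdivision gives $\sum_j h'_j\otimes w'_j = \sum_{m=1}^{N} a_m\otimes b_m$. Comparing, $\sum_i h_i\otimes w_i = \sum_j h'_j\otimes w'_j$, which is the asserted invariance; taking one subdivision to be a fixed reference exhibits the common value as an invariant of $P$, justifying the notation $\mathcal{D}(P)$.

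I expect the only real care needed is the combinatorial and topological bookkeeping for the arrangement $\mathcal{L}$: verifying that no arrangement cell crosses $\partial P$ (so that $P$ is genuinely the union of cells $Q_m$), and that inside each $R_i$ the relevant lines of $\mathcal{L}$ span the rectangle rather than terminating in its interior. Both points follow from the fact that $\mathcal{L}$ consists of full infinite lines and that all rectangle sides, including the sides of $P$, belong to $\mathcal{L}$; once these are pinned down, the entire argument is just bilinearity of the tensor product as packaged in \cref{lem:grid}. Holes in $P$ cause no difficulty, since both the cell decomposition of $P$ and the grid structure within each $R_i$ are insensitive to the global topology of $P$.
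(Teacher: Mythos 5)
Your proposal is correct and takes essentially the same approach as the paper: pass to the common refinement generated by all horizontal and vertical lines through the sides (equivalently, vertices) of both subdivisions, apply \cref{lem:grid} within each rectangle of each subdivision, and equate both original sums to the sum over the refinement's cells. The paper states this more tersely, but the argument is the same; your extra bookkeeping about arrangement cells and $\partial P$ just makes explicit what the paper leaves implicit.
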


\begin{proof}
Consider any two different subdivisions into rectangles $R_i$ and $R'_i$, and refine both subdivisions into a common subdivision by extending vertical and horizontal lines through all vertices of both $R_i$ and $R'_i$. By \cref{lem:grid}, this refinement does not change the sum over the rectangles in either subdivision. Because both of the sums coming from the initially given subdivisions are equal to the sum coming from their common refinement, they must be equal to each other.
\end{proof}

\begin{lemma}
\label{lem:invariant-of-dissection}
If two orthogonal polygons $P$ and $P'$ are related by an orthogonal dissection, then $\dehn(P)=\dehn(P')$. That is, the Dehn invariant remains invariant under orthogonal dissections.
\end{lemma}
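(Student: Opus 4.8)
The plan is to reduce the statement to two properties of the quantity $\sum h_i\otimes w_i$ — additivity over a partition and invariance under translation — with \cref{lem:invariant-of-polygon} doing all the real work. First I would pin down what an \emph{orthogonal dissection} is: a finite collection of pieces $Q_1,\dots,Q_n$ together with translation vectors $t_1,\dots,t_n$ such that the $Q_i$ have pairwise disjoint interiors and union $P$, while the translated pieces $Q_i+t_i$ have pairwise disjoint interiors and union $P'$. Since the cuts are axis-parallel, every piece is bounded by axis-parallel segments, so each $Q_i$ is an orthogonal polygon (a disconnected piece may be split into connected components, so we lose no generality assuming each $Q_i$ is an orthogonal polygon in the sense of the paper), and \cref{lem:invariant-of-polygon} assigns it a well-defined Dehn invariant $\mathcal{D}(Q_i)$.

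Next I would establish additivity: if $P$ is the interior-disjoint union of orthogonal polygons $Q_1,\dots,Q_n$, then $\mathcal{D}(P)=\sum_i\mathcal{D}(Q_i)$. Subdivide each $Q_i$ arbitrarily into rectangles; by \cref{lem:invariant-of-polygon} applied to $Q_i$, the sum of $h\otimes w$ over these rectangles equals $\mathcal{D}(Q_i)$. The union of all these per-piece subdivisions is itself a subdivision of $P$ into rectangles (interior-disjoint, union $P$), so \cref{lem:invariant-of-polygon} applied to $P$ gives $\mathcal{D}(P)=\sum_i\mathcal{D}(Q_i)$.

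Then I would invoke translation invariance. Translating $Q_i$ by $t_i$ carries each rectangle of its chosen subdivision to a rectangle of exactly the same width and height, so the rectangles subdividing $Q_i+t_i$ carry the same multiset of $(w,h)$ pairs; hence $\mathcal{D}(Q_i+t_i)=\mathcal{D}(Q_i)$. Applying the additivity statement to $P'$ and its decomposition into the translated pieces yields $\mathcal{D}(P')=\sum_i\mathcal{D}(Q_i+t_i)=\sum_i\mathcal{D}(Q_i)=\mathcal{D}(P)$, which is the claim.

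I expect the only friction to be definitional bookkeeping rather than any computation: making sure the pieces produced by axis-parallel cuts genuinely are orthogonal polygons (or finite unions thereof) so that \cref{lem:invariant-of-polygon} applies, and checking that gluing the per-piece rectangle subdivisions really produces a single subdivision of $P$ of the kind \cref{lem:invariant-of-polygon} speaks about. Once the notion of orthogonal dissection is fixed, bilinearity of the tensor product together with \cref{lem:grid} (via \cref{lem:invariant-of-polygon}) finishes everything.
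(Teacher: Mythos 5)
Your proposal is correct and follows essentially the same route as the paper: refine the dissection so all pieces are rectangles, observe that translation preserves each rectangle's width and height, and invoke \cref{lem:invariant-of-polygon} to identify both sums with $\mathcal{D}(P)$ and $\mathcal{D}(P')$. The extra additivity step you spell out is implicit in the paper's one-line argument but adds nothing substantively different.
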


\begin{proof}
We can refine any orthogonal dissection into a dissection for which all pieces are rectangles, and use those rectangles to calculate $\dehn(P)$ and $\dehn(P')$. Translating a rectangle obviously does not change its height or width, so the result follows from \cref{lem:invariant-of-polygon}.
\end{proof}

\section{The rank of the Dehn invariant}
\label{sec:rank}

Any tensor has a rank, the minimum number of terms needed to express it as a sum of tensor products. The Dehn invariants we are considering are order-two tensors over the field of rational numbers, and for any order-two tensor over any field, the rank of the tensor equals the rank of any matrix representing it for any basis over that field. As the rank of a matrix, it equals the minimum number of terms in an expression of the matrix as a sum of outer products of vectors~\cite{Hac-12}. Therefore, the rank of the Dehn invariant is just the rank of the matrix computed in \cref{sec:matrix}. It does not depend on the basis chosen to construct this matrix, and it can be computed using any standard algorithm for matrix rank, such as Gaussian elimination.

If an orthogonal polygon $P$ has an orthogonal dissection into $r$ rectangles with height $h_i$ and width $w_i$, we have seen that its Dehn invariant can be expressed as
\[\dehn(P)=\sum_{i=1}^r h_i\otimes w_i.\]
This is an expression as a sum of $r$ products, so the Dehn invariant has rank at most~$r$. Conversely, if an orthogonal polygon $P$ has a Dehn invariant with rank $r$, then it has an expression of exactly this form. However, not all terms of such an expression may be interpreted as describing rectangles. To come from a rectangle, a term $h_i\otimes w_i$ must have $h_i\cdot w_i>0$, in which case it can come from any rectangle of height $q\cdot |h_i|$ and width $|w_i|/q$ for any positive rational number $q$. All of these different rectangles produce the same value $h_i\otimes w_i$. But if the product $h_i\cdot w_i$ is a negative number, then $h_i\otimes w_i$ cannot be the Dehn invariant of a rectangle or of any polygon, because it would have negative area. For this reason, the rank of the Dehn invariant is a lower bound on the number of rectangles that can be obtained in an orthogonal dissection, but it requires an additional argument to prove that these two numbers are equal.

\section{Geometric realizability}

In the case of the polyhedral Dehn invariant, not every tensor in the space describing these invariants comes from the Dehn invariant of a polyhedron. There exists a surjective homomorphism of groups from the tensor space $\mathbb{R}\otimes_{\mathbb{Z}}\mathbb{R}/\mathbb{Z}$ onto the group $\Omega^1_{\mathbb{R}/\mathbb{Q}}$ of Kähler differentials, such that the tensors coming from Dehn invariants are exactly those mapped to the group identity. The preimages of nonzero Kähler differentials are tensors that do not come from Dehn invariants~\cite{Dup-NTM-01}. In contrast, for the orthogonal Dehn invariant, the only obstacle to geometric realizability is area:

\begin{figure}[t]
\centering\includegraphics[width=0.6\textwidth]{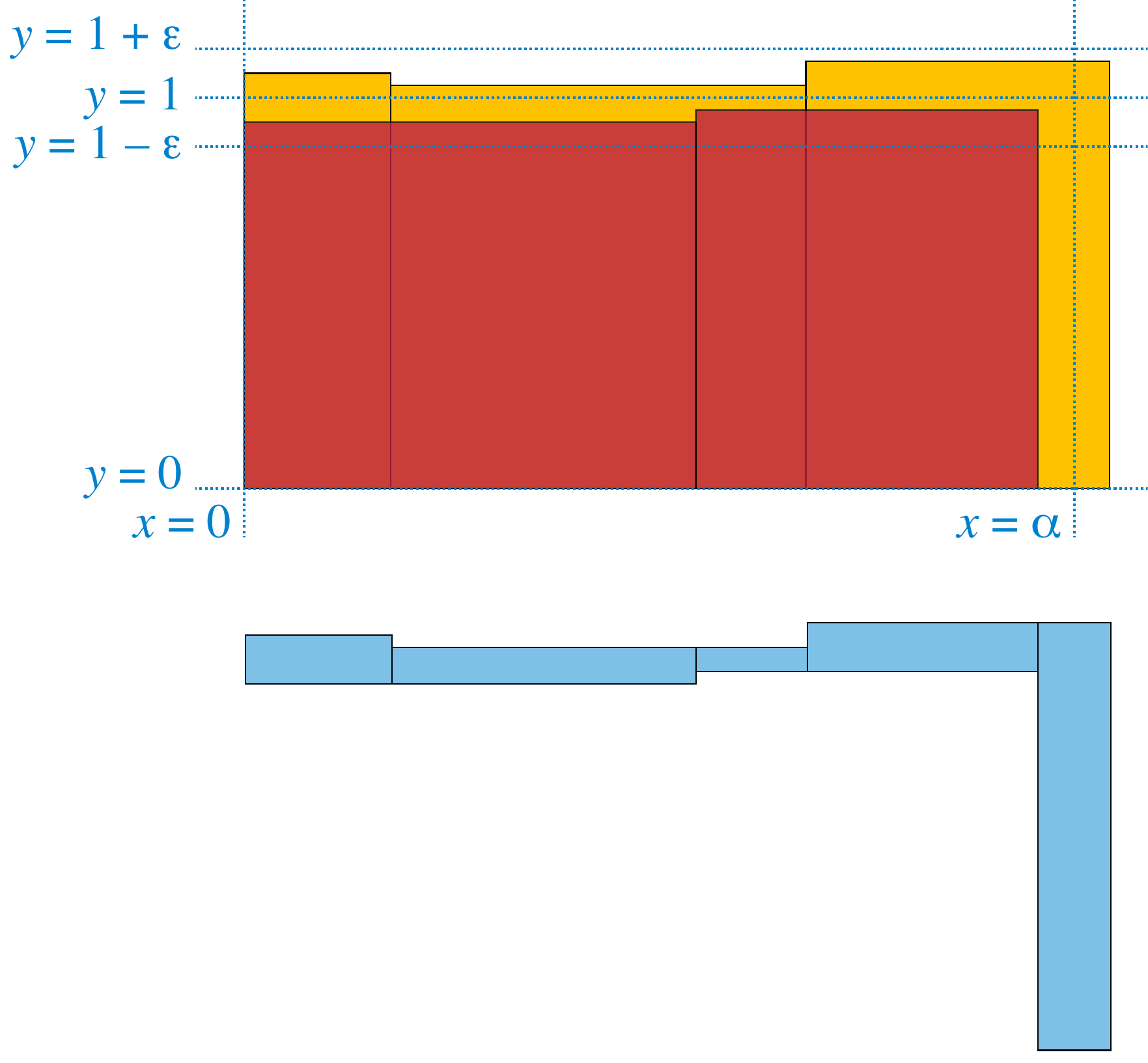}
\caption{Illustration for \cref{lem:realizability}: realizing each term in a tensor by a rectangle of height near one,
forming the difference of the positive and negative rectangles, and repartitioning the result into rectangles, produces a set of $r$ rectangles having a given Dehn invariant of rank $r$.}
\label{fig:realizability}
\end{figure}

\begin{lemma}
\label{lem:realizability}
Let $D=\sum_{i=1}^r h_i\otimes w_i$ be a tensor of rank $r$ in $\mathbb{R}\otimes_{\mathbb{Q}}\mathbb{R}$, and suppose that the putative area $a(D)=\sum_{i=1}^r h_i\cdot w_i$ is positive. Then $D$ is the Dehn invariant of a disjoint union of $r$ rectangles.
\end{lemma}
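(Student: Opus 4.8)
The plan is to realize $D$ by the construction indicated in \cref{fig:realizability}. First I would rewrite $D$ so that each of its $r$ terms corresponds to an honest axis-parallel rectangle of near-unit height, where some rectangles are tagged ``negative,'' meaning they are to be subtracted; then build one orthogonal polygon $\Omega$ with $\mathcal D(\Omega)=D$ by carving the negative rectangles out of a histogram formed from the positive ones; and finally re-cut $\Omega$ into exactly $r$ rectangles, whose disjoint union then has Dehn invariant $D$ as required.

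The first step is bookkeeping. Since $D$ has rank $r$ and is presented as a sum of $r$ simple tensors, that sum must be irredundant: every $h_i$ and $w_i$ is nonzero and each of $\{h_i\}$, $\{w_i\}$ is $\mathbb Q$-linearly independent (otherwise $D$ would be a sum of fewer terms). Split the indices into $P^+=\{i:h_iw_i>0\}$ and $P^-=\{i:h_iw_i<0\}$; no term has product zero, and $a(D)>0$ forces $P^+\ne\emptyset$. If $P^-=\emptyset$ we are already done, since then each $h_i\otimes w_i=|h_i|\otimes|w_i|$ is the Dehn invariant of an $|w_i|\times|h_i|$ rectangle and the disjoint union of these $r$ rectangles has Dehn invariant $D$; so assume $P^-\ne\emptyset$. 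Next I would normalize: pick a small $\tau>0$ with $\tau\sum_{i\in P^-}|h_iw_i|<a(D)$, assign target height $1+\tau$ to each $i\in P^+$ and $1$ to each $i\in P^-$, and for each $i$ choose a positive rational $q_i$ with $q_i|h_i|$ as close as desired to its target; set $h_i'=q_i|h_i|$ and $w_i'=\operatorname{sign}(h_i)\,w_i/q_i$. Moving the rational scalars across the tensor product shows $h_i'\otimes w_i'=h_i\otimes w_i$, so $D=\sum_i h_i'\otimes w_i'$ with $h_i'>0$, with $\operatorname{sign}(w_i')=\operatorname{sign}(h_iw_i)$, and with $|w_i'|=|h_iw_i|/h_i'$. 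Taking the $q_i$ close enough to their targets makes the total positive width $W^+=\sum_{i\in P^+}w_i'$ exceed the total negative width $W^-=\sum_{i\in P^-}|w_i'|$, makes every negative height strictly less than every positive height, and (since $h_i/h_j\notin\mathbb Q$ for $i\ne j$) leaves the heights in $P^-$ pairwise distinct.

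Then I would assemble the polygon. Laying the rectangles $w_i'\times h_i'$, $i\in P^+$, side by side on the $x$-axis gives a histogram polygon $\Omega^+$ of width $W^+$ with flat bottom edge, already cut into $|P^+|$ rectangles, so $\mathcal D(\Omega^+)=\sum_{i\in P^+}h_i'\otimes w_i'$. Now list $P^-$ in order of strictly decreasing $h_i'$, lay the rectangles $|w_i'|\times h_i'$ side by side along the $x$-axis from the origin (total width $W^-<W^+$), and let $\Omega=\Omega^+$ with each of these removed as a notch rising from the bottom edge; because every negative height lies below every column height of $\Omega^+$, no notch cuts through and $\Omega$ is an orthogonal polygon. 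Refining $\Omega^+$ into rectangles that also subdivide the notches and applying \cref{lem:invariant-of-polygon} to each piece yields $\mathcal D(\Omega)=\mathcal D(\Omega^+)-\sum_{i\in P^-}h_i'\otimes|w_i'|=\sum_i h_i'\otimes w_i'=D$. Finally I would re-cut $\Omega$: above the tallest notch it coincides with $\Omega^+$ truncated from below, a histogram with $|P^+|$ columns; and between two consecutive notch heights the horizontal slice of $\Omega$ is a single rectangle, because the notches that still reach that height form exactly a left-initial segment of those laid along the axis (this is where the decreasing-height ordering is used). This gives a partition of $\Omega$ into exactly $|P^+|+|P^-|=r$ rectangles, and the disjoint union of these rectangles has Dehn invariant $\mathcal D(\Omega)=D$.

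The step I expect to be the real obstacle is the last one --- arranging that $\Omega$ re-cuts into exactly $r$ rectangles rather than more. Removing a rectangular notch can in general add several rectangles to a partition, and a notch may straddle columns of $\Omega^+$ of differing heights; the count is saved only by the specific choice to carve each notch upward from the bottom edge in order of decreasing height, which forces every horizontal slice of $\Omega$ to be a single interval and so lets $\Omega$ break into horizontal bands (below) plus vertical columns (above) with nothing wasted. If the geometric argument proves awkward to write cleanly, an equivalent route is purely algebraic: reduce to showing that a rank-$r$ tensor of positive putative area can be rewritten as $\sum_{i=1}^r\tilde h_i\otimes\tilde w_i$ with every $\tilde h_i\tilde w_i>0$, which follows by repeatedly applying rational shears $h_j\otimes w_j+h_k\otimes w_k=(h_j+t\,h_k)\otimes w_j+h_k\otimes(w_k-t\,w_j)$, $t\in\mathbb Q$, to drain area out of positive-area terms into negative-area ones until none of the latter remain --- the hypothesis $a(D)>0$ guaranteeing that there is always enough positive area left to finish the job.
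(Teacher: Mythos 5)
Your proposal is correct and follows essentially the same route as the paper: realize the positive-area terms as side-by-side rectangles of height just above a common level, realize the negative-area terms as slightly shorter rectangles carved out of the bottom, verify the width inequality $W^->W^+$ fails (i.e.\ $W^+>W^-$) using $a(D)>0$, and re-cut the difference polygon into exactly $r$ pieces. The only divergence is cosmetic --- the paper makes all $r-1$ cuts vertical, so no ordering of the notches is needed, whereas you sort the notches by decreasing height and use horizontal bands below plus vertical columns above --- and both counts come out to $r$.
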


\begin{proof}
Partition the range of indices $[1,r]$ into two subsets $I^+$ and $I^-$, where $i\in I^+$ if $h_i\cdot w_i>0$ and in $i^-$ otherwise. (Because each term contributes to the rank, it is not possible for $h_i\cdot w_i$ to equal zero.)
Let $a^+=\sum_{i\in I^+} h_i\cdot w_i$ and $a^-=-\sum_{i\in I^-} h_i\cdot w_i$, so that $a(D)=a^+-a^-$. 
By assumption this is positive. We may assume without loss of generality that both sets of indices are non-empty: $I^+$ non-empty is needed to make $a(D)$ positive, and if $I^-$ is empty then we can represent $D$ using the disjoint union of rectangles of height $|h_i|$ and width $|w_i|$ without any additional construction. We can find two rational numbers $\alpha$ and $\varepsilon>0$ such that $a^-<
\alpha<a^+$, with $a^-<\alpha(1-\varepsilon)$ and $a^+>\alpha(1+\varepsilon)$.\footnote{Computationally, this uses the assumption from our model of computation that we can find a rational number between two products of combinations of basis elements.} These numbers are illustrated with the dashed blue axis-parallel lines in \cref{fig:realizability}.

Let $A$ be a rectangle with unit height and with width $\alpha$. For each index $i\in I^+$, find a positive rational number $q_i$ such that $1<q_i\cdot |h_i|<1+\varepsilon$, and construct a rectangle of height $q_i\cdot |h_i|$ and width $|w_i|/q_i$, whose Dehn invariant is $h_i\otimes w_i$. Arranging these rectangles side by side on a common baseline produces an orthogonal polygon $P^+$ whose height varies between $1$ and $1+\varepsilon$, whose area is $a^+$, and whose Dehn invariant is $\sum_{i\in I^+} h_i\otimes w_i$. In order to achieve area $a^+$ with height everywhere less than $1+\varepsilon$, $P^+$ must have width greater than $a^+/(1+\varepsilon)>\alpha$, so it completely covers $A$. These side-by-side rectangles are shown in yellow in the top part of \cref{fig:realizability}.

In the same way, for each index $i\in I^-$, find a positive rational number $q_i$ such that $1-\varepsilon<q_i\cdot |h_i|<1$,
and construct a rectangle of height $q_i\cdot |h_i|$ and width $|w_i|/q_i$, whose Dehn invariant is $-h_i\otimes w_i$.
Arranging these rectangles side by side on a common baseline produces an orthogonal polygon $P^-$ whose area is $a^-$ and whose Dehn invariant is $-\sum_{i\in I^-} h_i\otimes w_i$, entirely within $A$, the red rectangles  in the top part of \cref{fig:realizability}.

Arranging $P^+$ and $P^-$ so they share the same bottom left vertex, and computing the set-theoretic difference $P^+\setminus P^-$, produces a polygon $P$ whose Dehn invariant is $D$. It can be sliced vertically at each vertex whose $x$-coordinate is intermediate between its smallest and largest $x$-coordinate, as shown in the bottom part of \cref{fig:realizability}. There are $r-1$ slices (one for each side where two rectangles from $I^+$ meet, and one for each left side of a rectangle from $I^-$), so the result is a set of $r$ rectangles with total Dehn invariant $D$, as required.
\end{proof}

\section{Dissectability}

Long after the work of Dehn, Sydler proved that the polyhedral Dehn invariant is a complete invariant: any two polyhedra with the same volumes and Dehn invariants can be dissected to each other~\cite{Syd-CMH-65}. We need an analogous result for the orthogonal Dehn invariant. We do not bound the number of pieces in a dissection. It is not possible to bound this number of pieces by any function of the number of input vertices, because even the trivial dissection of a $1\times n$ rectangle into an $n\times 1$ rectangle requires $n$ pieces, a number that can be made arbitrarily large while keeping the number of input vertices constant.

\begin{theorem}
\label{thm:dissectability}
Any two orthogonal polygons with the same Dehn invariant have an orthogonal dissection.
\end{theorem}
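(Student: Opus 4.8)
The plan is to reduce $P$ and $P'$ to a common dissection by a normalization procedure on sets of rectangles. First I would record that orthogonal dissection-equivalence is an equivalence relation on orthogonal regions: symmetry is immediate, and transitivity follows by superimposing the cut pattern of a dissection $P\to Q$ with the (translated) cut patterns used to carve the pieces of $Q$ into a dissection $Q\to R$, producing a common refinement that simultaneously dissects $P$ and $R$. Intermediate objects will be allowed to be disjoint unions of rectangles, which is harmless for the statement. By \cref{lem:invariant-of-dissection}, every step below preserves the Dehn invariant, so it suffices to show that every orthogonal polygon $P$ can be orthogonally dissected to a set of rectangles determined, up to further orthogonal dissection, by $\mathcal D(P)$ alone; two polygons with equal Dehn invariants then meet at this common form. (Note $\mathcal D(P)\neq 0$, since $P$ has positive area and the area is read off from $\mathcal D(P)$.)

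The core tool is a move that normalizes the heights of a finite multiset $S$ of axis-parallel rectangles with positive real sides. Let $H\subseteq\mathbb R$ be the $\mathbb Q$-span of the heights appearing in $S$, of dimension $d$. The first technical step is the claim that $H$ has a $\mathbb Q$-basis $e_1,\dots,e_d$ consisting of \emph{positive} reals such that every height of $S$ is a \emph{nonnegative} rational combination of $e_1,\dots,e_d$. To see this, fix any $\mathbb Q$-basis of $H$ to identify it with $\mathbb Q^d$; the positive elements of $H$ are then the trace on $\mathbb Q^d$ of an open half-space of $\mathbb R^d$, so the finitely many heights of $S$ generate a pointed, full-dimensional cone, whose slice by a transverse affine hyperplane is a bounded polytope. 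Enclosing that polytope in the interior of a simplex, pushing the simplex vertices to nearby rational points still inside the open half-space, and coning from the origin yields a simplicial cone with linearly independent rational generators containing every height of $S$; these generators are the desired basis. Given such a basis, I would cut each rectangle of $S$ by horizontal lines into slices whose heights are the terms $c\,e_l$ of its height; rescale each slice of height $c\,e_l$ and width $v$ to one of height $e_l$ and width $c\,v$ by the elementary aspect-ratio dissection of a rectangle (cut into equal vertical strips and restack); and finally abut all slices of a common height $e_l$ side by side into one rectangle. This is an orthogonal dissection of $\bigsqcup S$ into at most $d$ rectangles with heights exactly $e_1,\dots,e_d$ (none degenerate, since each $e_l$, being part of a basis of the height-span, occurs with positive coefficient in some height). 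Running the mirror-image construction on widths then cuts $\bigsqcup S$ down to a \emph{minimal} set of exactly $r=\operatorname{rank}\mathcal D(S)$ rectangles, because once the heights form a $\mathbb Q$-basis the Dehn matrix (in any basis containing them) has only those rows nonzero and its rank equals the dimension of the $\mathbb Q$-span of the widths.

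It remains to see that any two minimal rectangle sets $\{(\alpha_i,\beta_i)\}_{i=1}^r$ and $\{(\gamma_i,\delta_i)\}_{i=1}^r$ with the same Dehn invariant $D$ are dissection-equivalent. In a minimal decomposition the heights are linearly independent over $\mathbb Q$, so $\{\alpha_i\}$ and $\{\gamma_i\}$ are each $\mathbb Q$-bases of the row space $H$ of $D$, which depends only on $D$. Applying the half-space/simplex argument to the combined list $\alpha_1,\dots,\alpha_r,\gamma_1,\dots,\gamma_r$ produces a single positive $\mathbb Q$-basis $e_1,\dots,e_r$ of $H$ with respect to which all $\alpha_i$ and all $\gamma_i$ have nonnegative coordinates; running the height-normalization above on each set dissects them to $\{(e_l,B_l)\}_l$ and $\{(e_l,B'_l)\}_l$ with $\sum_l e_l\otimes B_l=D=\sum_l e_l\otimes B'_l$. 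Since the $e_l$ are linearly independent over $\mathbb Q$, this forces $B_l=B'_l$ for every $l$, so the two minimal sets are dissected to the very same rectangle set. Combining everything: dissect $P$ into rectangles and reduce to a minimal set, do the same for $P'$, note both minimal sets have Dehn invariant $\mathcal D(P)=\mathcal D(P')$, and apply the previous sentence.

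The step I expect to demand the most care is the cone-enclosure claim producing a positive $\mathbb Q$-basis in which all prescribed heights have nonnegative coordinates — in particular verifying that the enclosing simplex can be taken with rational vertices lying in the open half-space of positive elements of $H$ while still containing every prescribed height in its cone — together with the bookkeeping that each normalization move really is an orthogonal dissection (so that \cref{lem:invariant-of-dissection} applies at each step). Everything else is a routine combination of chopping, restacking, and translation.
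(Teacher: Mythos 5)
Your argument is correct, and it takes a genuinely different route from the paper's. The paper proves \cref{thm:dissectability} by induction on the size of the rational basis for the rectangles' side lengths: the inductive step isolates one $y$-basis element $\hat y$, rescales each rectangle so the $\hat y$-coefficient of its height lies in $\{-1,0,1\}$, stacks the $+1$ rectangles above the $x$-axis and the $-1$ rectangles below it, and slices at heights $\varepsilon^+$ and $-\varepsilon^-$ chosen so that all dependence on $\hat y$ is concentrated in a single rectangle whose width is forced by the $\hat y$-row of the Dehn matrix; the remaining pieces use one fewer basis element and the induction closes. You instead normalize, without induction, to a canonical form consisting of exactly $\operatorname{rank}\mathcal{D}(P)$ rectangles, the key extra ingredient being the convex-geometry lemma that finitely many positive reals in a $\mathbb{Q}$-vector space $H\subset\mathbb{R}$ lie in a simplicial cone generated by a positive $\mathbb{Q}$-basis of $H$; after that, slicing, the rational aspect-ratio dissection, and regrouping make the heights (and then the widths) linearly independent, and linear independence in $\mathbb{R}\otimes_{\mathbb{Q}}\mathbb{R}$ forces the two normal forms to have identical widths, hence to coincide. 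The cone lemma is the one step you rightly flag, but your half-space/perturbation sketch of it is sound (each height has strictly positive coordinates with respect to the generators of the enclosing simplicial cone, and these coordinates vary continuously under small rational perturbations of the generators), so I see no gap. Your route buys something the paper obtains only by combining \cref{thm:dissectability} with \cref{lem:realizability} and the rank lower bound of \cref{sec:rank}: it dissects the polygon directly into rank-many rectangles, so the main theorem falls out of the construction at once. The paper's induction, in exchange, needs only the density of rational multiples of a single basis element rather than the cone lemma, and cleanly separates dissectability from the realizability of an abstract tensor.
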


\begin{figure}[t]
\centering
\includegraphics[width=0.8\textwidth]{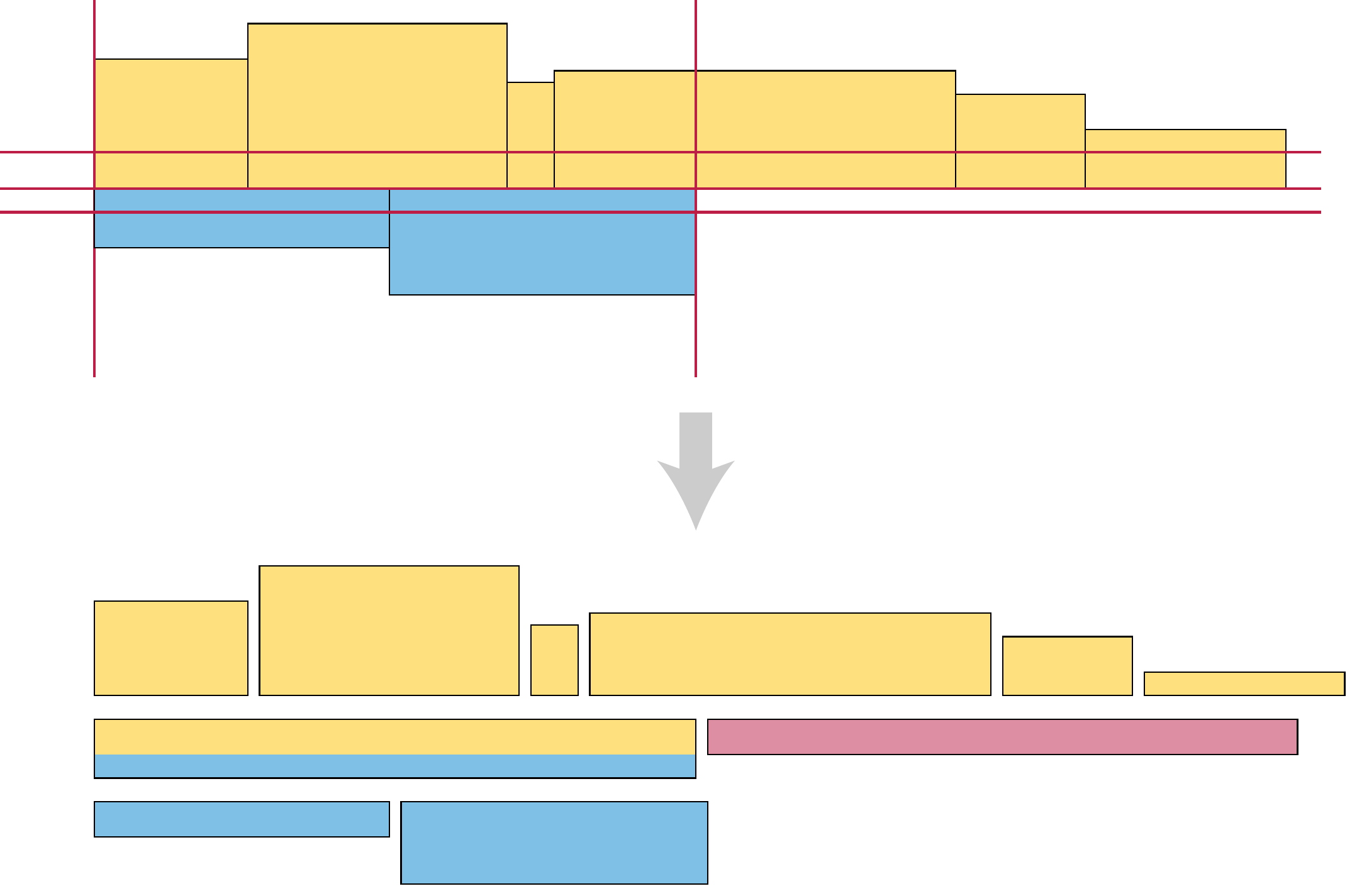}
\caption{Illustration for \cref{thm:dissectability}. The horizontal red lines are (from top to bottom) $y=\varepsilon^+$, $y=0$, and $y=-\varepsilon^-$; the vertical lines are (left to right) $x=0$ and $x=\min\{c_i^+,c_i^-\}$. Slicing the rectangles in $R_i^+$ (yellow) and $R_i^-$ (blue) by these lines dissects them into a family of rectangles whose heights do not depend on $\hat y$ (the bottom blue and yellow rectangles) together with a single rectangle whose coefficient of $\hat y$ is $\pm 1$ (red).}
\label{fig:dissectability}
\end{figure}

\begin{proof}
We may assume without loss of generality that the two polygons  $P_1$ and $P_2$ have already been dissected into (different) disjoint sets of rectangles $R_1$ and $R_2$. We use induction on the size of rational bases for the heights and widths of these rectangles (which may be a superset of a basis for the Dehn invariant). As a base case, if these bases have size one, all rectangles have heights and widths that are rational multiples of each other. In this case we can scale the $x$ and $y$ coordinates separately to clear denominators in these coordinates and make all rectangle side lengths integers, allowing a dissection using unit squares.

Otherwise, by the symmetry of heights and widths, we can assume without loss of generality that the $y$-basis has at least two elements; let $\hat y$ be one of them. For each rectangle in $R_1$ and $R_2$, of width $w_i$ and height $h_i$ let $q_i$ be the coefficient of $\hat y$ in the expansion of $h_i$ as a rational combination of basis elements. Whenever $q_i\ne 0$, apply the base case of the theorem (for the one-element bases $\{w_i\}$ and $\{h_i\}$) to dissect that rectangle into another rectangle of width $q_i\cdot w_i$ and height $h_i/q_i$. After this step, for all rectangles in $R_1$ and $R_2$, the coefficient of $\hat y$ in the rectangle height belongs to $\{-1,0,1\}$. Let $R_i^+$ be the rectangles in $R_i$ for which this coefficient is $1$, $R_i^-$ be the rectangles for which it is $-1$, and $R_i^0$ be the rectangles for which it is~$0$. By composition of dissections, a dissection of these modified sets of rectangles into each other will lead to a dissection of $P_1$ and $P_2$ into each other. 

For each of $P_1$ and $P_2$, translate the rectangles of $R_i^+$ so they are placed side by side, with their bottom sides all placed on the $x$-axis, with the left side of the leftmost rectangle placed on the $y$-axis. Similarly translate the rectangles of $R_i^-$ so they are side by side, with their top sides all placed along the $x$-axis, and again with the left side of the leftmost rectangle placed on the $y$-axis. Let $\varepsilon$ be the smallest height of any rectangle in either $R_1$ or $R_2$.
Choose two numbers $0<\varepsilon^+<\varepsilon$ and $0<\varepsilon^-<\varepsilon$, so that both of these numbers are expressible as a rational combination of elements of the $y$-basis, with the coefficient of $\hat y$ in $\varepsilon^+$ equal to $1$ and the coefficient of $\hat y$ in $\varepsilon^-$ equal to $-1$. (The ability to make this choice hinges on the fact that the rational multiples of any remaining basis element are dense in the real number line.) Let $c_i^+$ be the $x$-coordinate of the right end of the rightmost rectangle in the placement of $R_i^+$, and define $c_i^-$ symmetrically.

Now that the rectangles have been placed in this way, slice them by the horizontal lines $y=\varepsilon^+$ and $y=-\varepsilon^-$. This leaves a hexagonal region between these two lines, which we dissect into two rectangles by slicing it with the vertical line $x=\min\{c_i^+,c_i^-\}$ (two different lines, one for $R_1$ and the other for $R_2$). The dissection is shown in \cref{fig:dissectability}.

The rectangles in $R_i^0$, the remaining parts of rectangles in $R_i^+$ above the line $x=\varepsilon^+$, and the remaining parts of rectangles in $R_i^-$ below the line $x=\varepsilon^-$, all have heights whose rational expansion in terms of the $y$-basis does not use $\hat y$. The rectangle to the left of the vertical slice line, and between the two horizontal slice lines, has height $\varepsilon^++\varepsilon^-$; here, the coefficients of $\hat y$ cancel leaving a rectangle height whose expansion in terms of the basis does not use $\hat y$. This leaves all dependence on $\hat y$ concentrated in one remaining rectangle, to the left of the vertical slice line, with height $\varepsilon^+$ or $\varepsilon^-$ and width $|c_i^+-c_i^-|$. Let $\hat w_i$ denote this width.

Because all remaining pieces except this rectangle have heights that do not depend on $\hat y$, it follows that the coefficients of $\dehn(P_i)$, in the row of the coefficient matrix corresponding to basis element $\hat y$,
are exactly the coefficients in the rational expansion of $\hat w_i$ for a rectangle of height $\varepsilon^+$, or the negation of those coefficients for a rectangle of height $\varepsilon^-$. By the assumption that $\dehn(P_1)=\dehn(P_2)$, these matrix coefficients must be equal. The widths $\hat w_1$ and $\hat w_2$ of the rectangles can be recovered, up to their signs, as the number represented by these coefficients using the $y$-basis. It is not possible for the two signs to be different, because both $\hat w_1$ and $\hat w_2$ are non-negative. Therefore, the two remaining rectangles must both have the same height, $\varepsilon^+$ or $\varepsilon^-$, and the same width, $\hat w_1=\hat w_2$, and need no more dissection to be transformed into each other.

We have shown that $P_1$ and $P_2$ can be dissected into two congruent rectangles whose height expansion uses $\hat y$, and into a larger number of additional rectangles whose height expansion does not use $\hat y$. These remaining rectangles have a smaller basis for their heights and (because we have removed a congruent rectangle from each polygon) have equal Dehn invariants. The result follows from the induction hypothesis.
\end{proof}

\section{Putting the pieces together}

We are now ready to prove our main results:

\begin{theorem}
\label{thm:minrect-rank}
The minimum number of rectangles into which an orthogonal polygon can be dissected by axis-parallel cuts and translation equals the rank of its orthogonal Dehn invariant.
\end{theorem}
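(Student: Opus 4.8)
The plan is to prove the two inequalities separately. One direction is essentially already in hand from \cref{sec:rank}: if $P$ has an orthogonal dissection into $k$ rectangles of heights $h_i$ and widths $w_i$, then by \cref{lem:invariant-of-dissection} we have $\mathcal{D}(P)=\sum_{i=1}^{k}h_i\otimes w_i$, an expression of $\mathcal{D}(P)$ as a sum of $k$ rank-one tensors, so $k\ge\operatorname{rank}\mathcal{D}(P)$. Thus the rank is a lower bound on the number of rectangles in any orthogonal dissection, and what remains is to exhibit a dissection meeting this bound.

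For the matching upper bound, set $r=\operatorname{rank}\mathcal{D}(P)$ and fix a decomposition $\mathcal{D}(P)=\sum_{i=1}^{r}h_i\otimes w_i$ into $r$ rank-one terms, which exists by the definition of tensor rank. The putative area $a(\mathcal{D}(P))=\sum_{i=1}^{r}h_i\cdot w_i$ equals the true area of $P$ by the area-recovery identity of \cref{sec:matrix}, and is positive since $P$ is a genuine polygon. Hence \cref{lem:realizability} applies and produces a disjoint union $Q$ of exactly $r$ rectangles with $\mathcal{D}(Q)=\mathcal{D}(P)$. Since $P$ and $Q$ have the same Dehn invariant, \cref{thm:dissectability} — whose proof operates directly on disjoint sets of rectangles — furnishes an orthogonal dissection between them; composing it with the trivial dissection of $Q$ into its $r$ constituent rectangles yields an orthogonal dissection of $P$ into $r$ rectangles. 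Together the two bounds give that the minimum is exactly $r=\operatorname{rank}\mathcal{D}(P)$.

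The step I expect to be the crux is the appeal to \cref{lem:realizability}. The difficulty it resolves is that a rank-$r$ tensor expression $\sum h_i\otimes w_i$ is not automatically a list of $r$ rectangles: individual terms can have $h_i\cdot w_i<0$, so they cannot be read off as rectangles, nor merged with the positive terms without inflating the count. \cref{lem:realizability} bridges exactly this gap, by realizing the positive and negative parts as very flat polygons $P^{+}$ and $P^{-}$, forming $P^{+}\setminus P^{-}$, and re-slicing it into $r$ rectangles. Everything else is routine: bilinearity of the tensor product, the invariance of $\mathcal{D}$ established in \cref{sec:invariance}, and \cref{thm:dissectability}. As a bonus, because $\operatorname{rank}\mathcal{D}(P)$ is just the rank of the explicit rational matrix of \cref{sec:matrix}, it is computable in polynomial time by Gaussian elimination, which is what makes the equality algorithmically useful.
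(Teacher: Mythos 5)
Your proposal is correct and follows essentially the same route as the paper's own proof: the rank lower bound from the tensor-sum expression, then \cref{lem:realizability} to produce $r$ rectangles with the matching invariant, then \cref{thm:dissectability} to dissect $P$ into them. Your additional check that the putative area is positive (via the area-recovery identity, so that \cref{lem:realizability} applies) is a detail the paper leaves implicit, and is a worthwhile inclusion.
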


\begin{proof}
This number of rectangles is lower-bounded by the rank, by the discussion in \cref{sec:rank}. If the rank is $r$, then there exists a set of $r$ rectangles with the same invariant as the polygon, by \cref{lem:realizability}.
The given polygon can be dissected into these rectangles, by \cref{thm:dissectability}.
\end{proof}

\begin{theorem}
\label{thm:minrect-algorithm}
We can compute the minimum number of rectangles into which an orthogonal polygon can be dissected, given a representation for its coordinates over a rational basis, in a polynomial number of rational-arithmetic operations.
We can construct a minimal set of rectangles into which it can be dissected, in a polynomial number of operations using arithmetic over the given rational basis.
\end{theorem}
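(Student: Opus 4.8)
The plan is to prove the two assertions separately, both by reduction to the characterization just established, that the minimum number of rectangles equals $\operatorname{rank}\mathcal{D}(P)$. The counting problem thereby becomes the computation of the rank of a rational matrix, and the construction problem becomes the execution of the explicit procedure in the proof of \cref{lem:realizability} applied to a rank factorization of that matrix.

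\emph{Counting.} I would first build the matrix $M=\mathcal{D}(P)$ of \cref{sec:matrix}. Because only rational arithmetic is permitted at this stage, I use the comparison-free description: fix one horizontal side of $P$ as a base line, and to every other horizontal side associate the signed rectangle running between that side and the base line, the sign and multiplicity being determined by the orientation of the side in the input's cyclic list of vertices --- combinatorial information that needs no coordinate comparisons. The width and height of each such rectangle are differences of the given coordinate vectors, hence rational combinations of $x$- and $y$-basis elements produced by vector subtraction, and the outer product $M_i$ has rational entries computed by $O(\lvert X\rvert\,\lvert Y\rvert)$ multiplications, where $X$ and $Y$ are the $x$- and $y$-bases; there are $O(n)$ such rectangles for an $n$-vertex polygon. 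Summing the $M_i$ yields $M$ in $O(n\lvert X\rvert\lvert Y\rvert)$ rational operations, polynomial in the input size, and its rank is then found by Gaussian elimination in $\operatorname{poly}(\lvert X\rvert,\lvert Y\rvert)$ further rational operations. By the discussion of \cref{sec:rank} and the preceding theorem, this rank is the minimum number of rectangles.

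\emph{Construction.} Given $M=\mathcal{D}(P)$ of rank $r$, I would next extract a rank factorization $M=\sum_{i=1}^{r}h_i\otimes w_i$ via Gaussian elimination, in a form with coefficient vectors of polynomially bounded bit-size (for instance, the pivot columns of $M$ paired with the rows of its reduced row-echelon form, whose entries are ratios of minors of $M$). The putative area $\sum_i h_i\cdot w_i$ equals $\operatorname{area}(P)>0$, so \cref{lem:realizability} applies, and its proof gives an explicit recipe for $r$ rectangles with Dehn invariant $\mathcal{D}(P)$; I would run that recipe. It uses: one sign test per index (on the rational combination of products $h_i\cdot w_i$, and on the combinations $h_i$ and $w_i$) to split $[1,r]$ into $I^+$ and $I^-$ and to pass to $\lvert h_i\rvert,\lvert w_i\rvert$; a constant number of ``rational between'' calls to select $\alpha$ and $\varepsilon$ lying between the two area sums $a^+,a^-$, which are rational combinations of products of basis elements; and, for each index, one ``rational between'' call to select $q_i$ --- observing that a $q_i$ with $q_i\lvert h_i\rvert\in(1,1+\varepsilon)$ is obtained as $1/r_i$ for any rational $r_i$ strictly between the combinations $\lvert h_i\rvert/(1+\varepsilon)$ and $\lvert h_i\rvert$, and symmetrically for $I^-$, so no search is needed. (That $\lvert h_i\rvert\neq 0$ as a real number follows from the rational basis property, since the coefficient vector of $h_i$ is nonzero.) Slicing the polygon $P^+\setminus P^-$ of that proof then yields the $r$ output rectangles, with $O(n)$ primitive operations and polynomially many rational operations in all.

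The delicate point --- and the reason the statement asks only for a minimal \emph{set of rectangles} rather than for a dissection --- is that one must not attempt to output the dissection itself: by \cref{thm:dissectability} the polygon does dissect into the produced rectangles, but, as the $1\times n$ example shows, such a dissection can have a non-polynomial number of pieces, so \cref{thm:dissectability} is invoked here only for existence and never run. The remaining work is bookkeeping: checking that the echelon-form factorization keeps every coefficient vector polynomial-size, that the auxiliary rationals $\alpha$, $\varepsilon$, $q_i$ returned by the model's primitives are of controlled size, and that the total count of primitive and arithmetic operations stays polynomial --- routine once the above pieces are in place.
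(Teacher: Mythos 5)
Your proposal is correct and follows essentially the same route as the paper: compute the Dehn invariant matrix, obtain its rank by Gaussian elimination, and then run the construction from \cref{lem:realizability} on a rank factorization, using the model's sign-test and rational-between primitives. The extra details you supply (the comparison-free signed-rectangle decomposition, the explicit choice of $q_i$ via one rational-between call, and the bit-size bookkeeping) are all consistent with, and merely elaborate on, the paper's argument.
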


\begin{proof}
To compute the rank, we compute the Dehn invariant as described in \cref{sec:matrix}, and apply any polynomial-time algorithm for computing the rank of a rational-number matrix, such as Gaussian elimination. To construct the rectangles, we follow the steps in the proof of \cref{lem:realizability}, which uses only a polynomial number of operations involving comparing linear combinations of basis elements and finding rational approximations to them.
\end{proof}

We remark that, as well as counting rectangles, the rank of the orthogonal Dehn invariant can also count edges:

\begin{observation}
\label{obs:edges}
Let $r$ be the minimum number of rectangles into which a given orthogonal polygon has an orthogonal dissection and let $s$ be the minimum number of edges of a polygon into which it has an orthogonal dissection. Then $s=2r+2$, and there exists a simple polygon with $s$ edges into which it has an orthogonal dissection.
\end{observation}

\begin{proof}
In one direction, suppose that a given orthogonal polygon has an orthogonal dissection into $r$ disjoint rectangles.
Line up these rectangles with a common baseline, side by side, and glue them together, producing a simple polygon with at most $2r+2$ edges. Therefore, $s\le 2r+2$, and a simple polygon with $\le 2r+2$ edges can be produced by an orthogonal dissection.

In the other direction, suppose that we can dissect the given orthogonal polygon into an orthogonal polygon with $s$ edges, by an orthogonal dissection. Because edges of orthogonal polygons alternate between horizontal and vertical, $s$ must be even, with $s/2$ horizontal edges. Each $x$-coordinate of a horizontal edge is shared by the next horizontal edge along the boundary of the polygon, so there must also be at most $s/2$ distinct $x$-coordinates. Among these, at most $s/2-2$ are non-extreme (neither the minimum nor the maximum coordinate value). Cutting the polygon by vertical lines through these non-extreme coordinates, and gluing together the rectangular pieces obtained within each resulting vertical slab between two such lines, produces a dissection into $s/2-1$ rectangles, so $r\le s/2-1$. Equivalently, $2r+2\le s$.
\end{proof}

\section{Dissection into prototiles}
\label{sec:prototile}

Another use of the polyhedral Dehn invariant, besides dissection of one shape into another, involves tiling. Any polyhedron that tiles space must have Dehn invariant zero, and any polyhedron with Dehn invariant zero can be dissected into a different polyhedron that tiles space. For the axis-parallel polygonal Dehn invariant we study, things don’t work out quite so neatly. The Greek cross can tile, but has nonzero Dehn invariant. More, any axis-parallel polygon can be cut into multiple rectangles, and these can tile space (non-periodically) by grouping them into rows of the same type of rectangle (\cref{fig:trirect-row-tiling}). So the Dehn invariant cannot be used to prove that such a thing is impossible, because it is always possible. If we could rotate pieces, we could also rearrange certain sets of more than two rectangles, such as the three rectangles of \cref{fig:trirect}, into a single-piece axis-parallel hexagon that could tile the plane periodically (\cref{fig:trirect-flip-tiling}).

\begin{figure}[t]
\centering\includegraphics[width=0.6\textwidth]{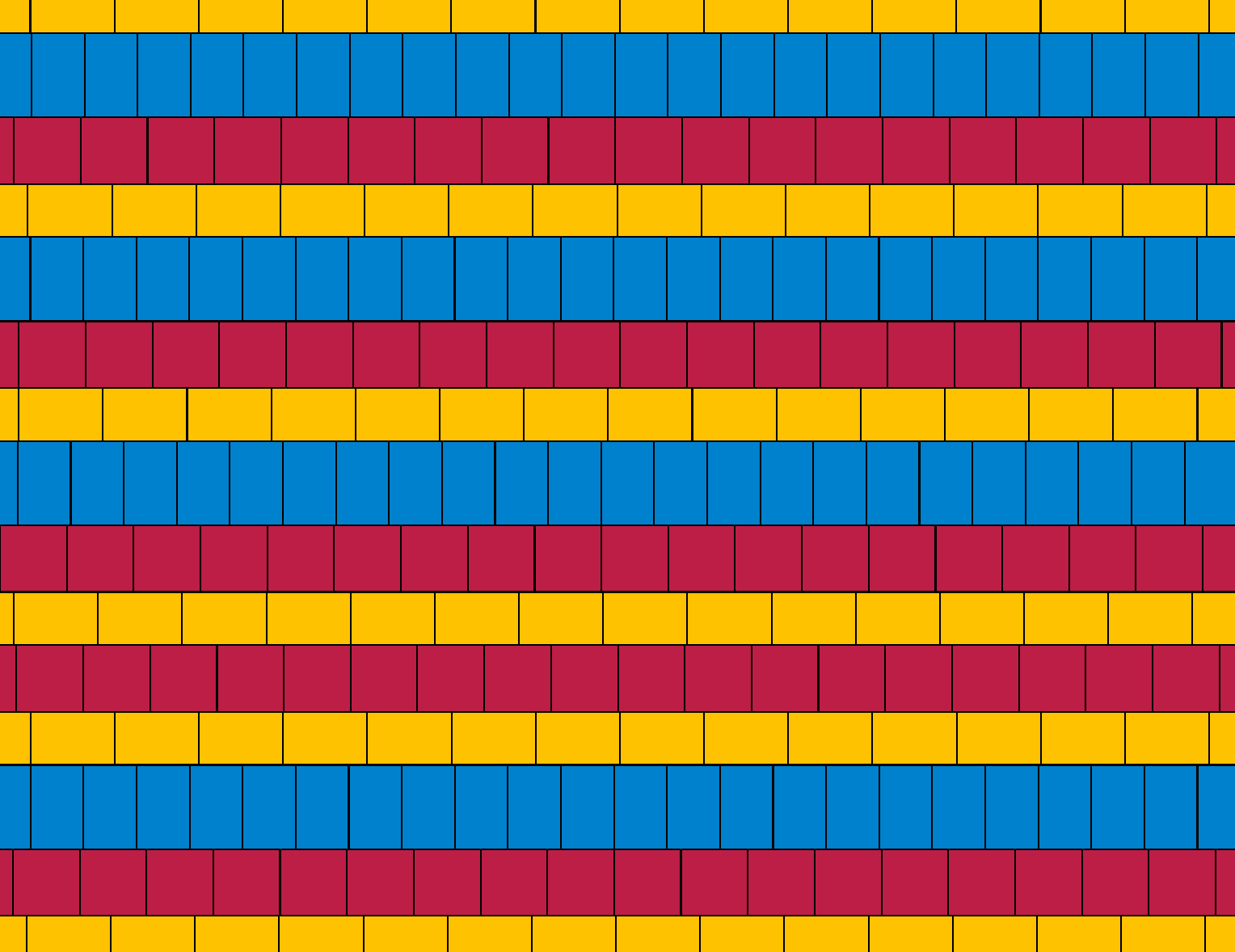}
\caption{Construction of a non-periodic tiling from an arbitrary dissection into rectangles}
\label{fig:trirect-row-tiling}
\end{figure}

However, for the orthogonal dissections considered here, without rotation, the rank of the Dehn invariant does produce a limitation on the ability to tile periodically without rotation. Here, we follow Grünbaum and Shephard~\cite{GruShe-89} in defining a \emph{periodic} tiling to be a tiling that has a two-dimensional lattice of translational symmetries. For some other authors, this would be called a 2-periodic tiling, to distinguish it from tilings that have one-dimensional but not two-dimensional translational symmetry; we do not make this distinction.

\begin{theorem}
\label{thm:rank-vs-tiling}
An orthogonal polygon $P$, or any finite number of copies of $P$, has an orthogonal dissection to a prototile that can tile the plane periodically if and only if the rank of its Dehn invariant is at most two.
\end{theorem}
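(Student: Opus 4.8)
The plan is to prove the two implications separately, with the harder direction reduced to a structural fact about Dehn invariants of fundamental domains. For the ``if'' direction, first note that the rank is at least one for any nonempty orthogonal polygon, since its (positive) area is recovered linearly from $\mathcal{D}(P)$; so only ranks one and two need attention. If the rank is one, the theorem relating rank to the minimum number of rectangles dissects $P$ into a single rectangle, and a rectangle tiles the plane periodically by translations. If the rank is two, the same theorem (through \cref{lem:realizability} and \cref{thm:dissectability}) dissects $P$ into two rectangles $R_1$ and $R_2$; rescaling them as in the proof of \cref{lem:realizability}, using that a rectangle orthogonally dissects into any rectangle of equal area and rationally related sides, I would make $R_2$ somewhat more than half as wide as $R_1$ and no taller, and place the two side by side on a common baseline. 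Since rank two forces the two heights, and also the two widths, to be irrational multiples of one another, the result is a genuine rectilinear hexagon: a rectangle with a rectangular notch removed from one corner. Such a notched rectangle tiles the plane by translations --- it is a fundamental domain for a lattice, in direct generalization of the standard translation tiling of the plane by L-trominoes --- and that tiling is periodic, completing the ``if'' direction.

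For the ``only if'' direction, suppose $k$ copies of $P$ have an orthogonal dissection into a prototile $T$ that tiles the plane periodically, and let $\Lambda$ be a rank-two lattice of translational symmetries of this tiling. Picking one tile from each orbit of the tiling under $\Lambda$ and taking their union produces a finite union of axis-parallel rectangles with pairwise disjoint interiors that is a fundamental domain $F$ for $\Lambda$. Translating a rectangle, or applying a half-turn or an axis-parallel reflection, leaves its height and width unchanged, so each chosen tile contributes $\mathcal{D}(T)$ to the sum and $\mathcal{D}(F)$ is a positive integer multiple of $\mathcal{D}(T)=k\,\mathcal{D}(P)$; in particular $\mathcal{D}(F)$ and $\mathcal{D}(P)$ have the same rank. (I would take ``tile periodically'' here to exclude $90^\circ$ rotations of $T$, which transpose the Dehn invariant; the paper's footnote already flags that such rotations lie outside the present methods.) So the reverse direction reduces to the following claim, which I regard as the heart of the proof: \emph{if a finite union of axis-parallel rectangles $F$ with disjoint interiors is a fundamental domain for a lattice $\Lambda$, then $\mathcal{D}(F)$ has rank at most two.}

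To prove this claim, I would first record a general edge formula: cutting $F$ by vertical lines through all its vertices writes $\mathcal{D}(F)$ as a sum over the resulting strips of (total height of $F$ in the strip)$\,\otimes\,$(strip width), and a telescoping rearrangement converts this into $\mathcal{D}(F)=\sum_e \sigma_e\,\ell_e\otimes(c-x_e)$, the sum taken over the vertical edges $e$ of $\partial F$, where $\ell_e$ is the length of $e$, $x_e$ its $x$-coordinate, $\sigma_e=\pm1$ records which side of $e$ the region lies on, and $c$ is arbitrary (the expression is independent of $c$ because $\sum_e\sigma_e\ell_e=0$). Next, after subdividing edges so that in the tiling $\{F+v:v\in\Lambda\}$ neighboring tiles meet edge-to-edge, the vertical edges of $\partial F$ pair off: each edge $e$ with $F$ on one side is glued across $e$ to the copy $F+v(e)$ lying on the other side, for a unique $v(e)\in\Lambda$, with partner edge $e-v(e)$ carrying the opposite sign and equal length. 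Grouping the edge sum into these pairs and using $x_{e-v(e)}=x_e-v(e)_x$ collapses it to $\mathcal{D}(F)=-\sum_{\text{pairs}}\sigma_e\,\ell_e\otimes v(e)_x$. Finally, every $v(e)_x$ lies in the projection of $\Lambda$ to the $x$-axis, a subgroup of $(\mathbb{R},+)$ of rank at most two; writing it as $\mathbb{Z}g_1+\mathbb{Z}g_2$ and collecting coefficients gives $\mathcal{D}(F)=H_1\otimes g_1+H_2\otimes g_2$ for reals $H_1,H_2$, a sum of two pure tensors, so its rank is at most two and hence so is that of $\mathcal{D}(P)$.

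The step I expect to be most delicate is this Key Lemma, and within it the edge-pairing argument: one must justify the edge-to-edge refinement (the tiling is locally finite, so only finitely many subdivision points occur in any bounded region), handle the sign bookkeeping carefully, and observe that when $\Lambda$ happens to contain an axis-parallel vector the $x$-projection has rank one and $\mathcal{D}(F)$ is already of rank one. A smaller loose end is confirming that the notched rectangle used in the ``if'' direction genuinely tiles the plane by the stated lattice; this is elementary once one checks that the lattice has covolume equal to the area of the notched rectangle and that its lattice translates have pairwise disjoint interiors, since a lattice packing of full density is necessarily a tiling.
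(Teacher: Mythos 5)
Your proposal is correct, and while the ``if'' direction coincides with the paper's (rank one gives a rectangle; rank two gives two rectangles glued into an axis-parallel L-hexagon, which tiles by translations), your ``only if'' direction takes a genuinely different route. The paper argues by invoking the fact that every lattice in the plane admits a fundamental domain shaped as an axis-parallel hexagon, so that the appropriate number of copies of $P$ can be dissected (via \cref{thm:dissectability}) into that hexagon, whose Dehn invariant visibly has rank at most two; this is short but leans on two unproved facts (existence of the hexagonal fundamental domain, and that it has the same Dehn invariant as the corresponding union of tiles). You instead take the fundamental domain to be one tile per $\Lambda$-orbit and compute its Dehn invariant directly: the telescoping identity $\mathcal{D}(F)=\sum_e \sigma_e\,\ell_e\otimes(c-x_e)$ over vertical boundary edges, followed by the edge-pairing $e\leftrightarrow e-v(e)$ across the tiling, collapses the invariant to $H_1\otimes g_1+H_2\otimes g_2$ where $g_1,g_2$ generate the $x$-projection of $\Lambda$. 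I checked the pairing: it is a fixed-point-free involution with $\sigma_{e-v(e)}=-\sigma_e$ and equal lengths, so the bookkeeping works, and scalar multiplication by the positive integers $N$ and $k$ does not change rank over $\mathbb{Q}$. Your version is more self-contained and arguably sharper --- it shows \emph{why} the bound is two (the lattice has rank two) and localizes the invariant on generators of the lattice projection --- at the cost of the edge-to-edge refinement and sign bookkeeping you rightly flag as the delicate step; the paper's version is shorter but defers its load-bearing claim about hexagonal fundamental domains to folklore. Both handle only translation-only (or at most half-turn/reflection) tilings, consistent with the paper's footnote excluding $90^\circ$ rotations, and your explicit remark on this point is a welcome clarification the paper omits.
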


\begin{proof}
If the rank of the Dehn invariant is one, $P$ can be dissected to a rectangle, which tiles periodically. If the rank is two, $P$ can be dissected into two rectangles, and reassembled into a hexagon, which (like the prototiles of \cref{fig:trirect-flip-tiling}) tile periodically.

\begin{figure}[t]
\centering\includegraphics[width=0.6\textwidth]{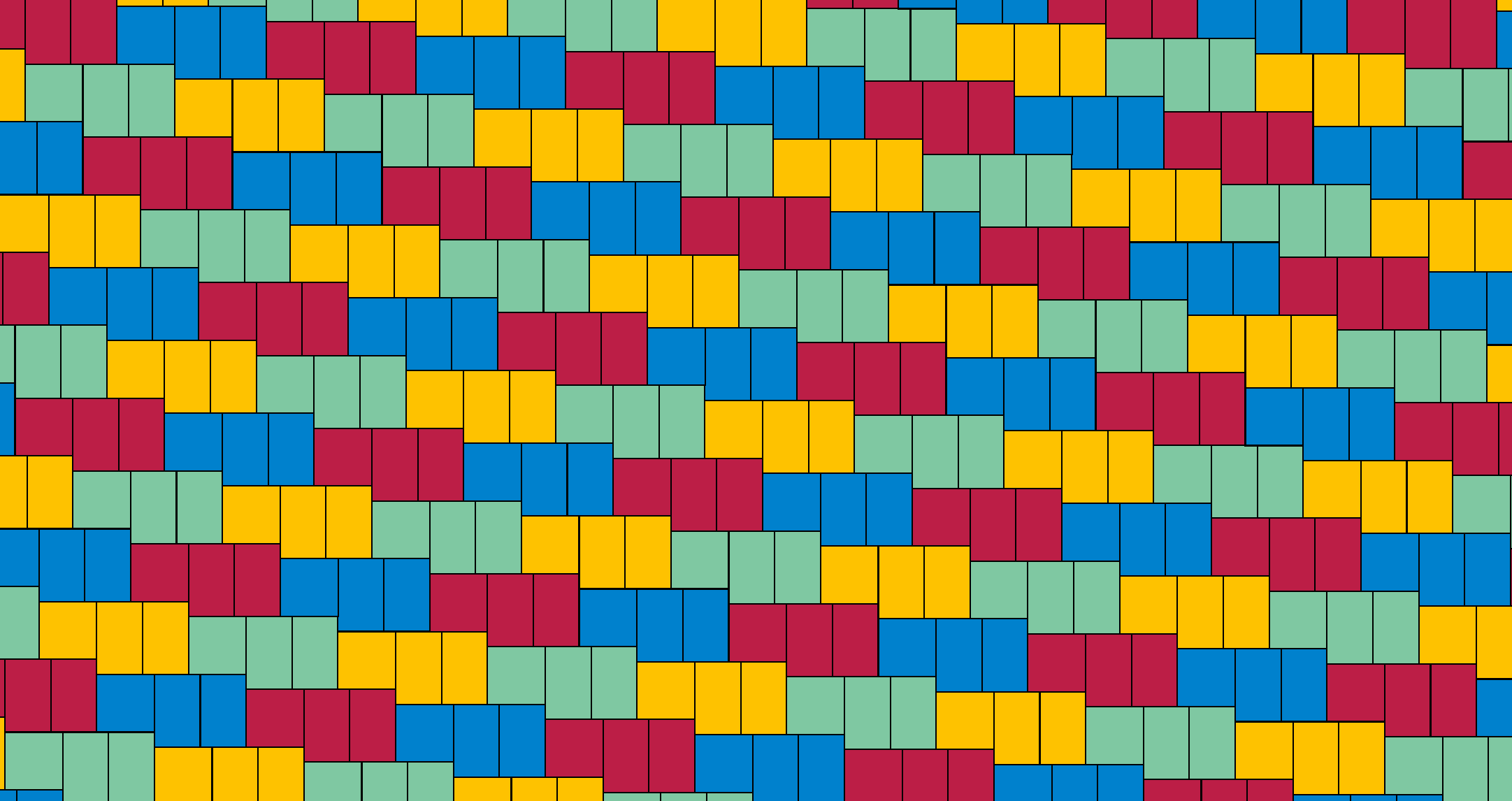}
\caption{With orthogonal rotations of pieces, the polygon of \cref{fig:trirect} can be dissected to form the prototile of a periodic tiling of the plane.}
\label{fig:trirect-flip-tiling}
\end{figure}

Combining $n$ copies of $P$ multiplies the Dehn invariant by the scalar $n$, which does not change the rank.  Every periodic tiling of the plane has a fundamental region in the shape of an axis-parallel hexagon, like the prototiles of \cref{fig:trirect-flip-tiling}. (Because it tiles by translation, this fundamental region may combine a finite number of prototiles of the tiling.) If copies of $P$ could be dissected to the prototile of a tiling, they could also be dissected to this fundamental region, which has Dehn invariant at most two.
\end{proof}

In particular, as a shape whose Dehn invariant has rank three, the orthogonal polygon of \cref{fig:trirect} has no orthogonal dissection to a prototile for a periodic tiling of the plane.

\section{Orthogonal dissections with rotation}

In this section we extend the notion of an orthogonal dissection to allow $90^\circ$ rotations, or equivalently reflections across a line with slope $\pm 1$. These two notions are equivalent because, for any orthogonal dissection, we can subdivide the pieces of the dissection into rectangles, for which these rotations and reflections have the same effect.

\begin{definition}
If $P$ is any orthogonal polygon, let $P\transpose$ denote the reflection of $P$ across a line of slope $-1$. We call $P\transpose$ the \emph{transpose} of $P$ by analogy to the transpose of a matrix or tensor, for which we use the same notation.
\end{definition}

Transposition (of polygons or matrices) commutes with taking the Dehn invariant:

\begin{observation}
For all orthogonal polygons $P$, and for any representation of the Dehn invariant $\mathcal D$ as a matrix over a combined basis,
$\dehn(P\transpose)=\dehn(P)\transpose$.
\end{observation}

\begin{proof}
This follows by subdividing $P$ into rectangles and observing that the effect of transposition on a rectangle commutes with the direct formula for the Dehn invariant of the rectangle.
\end{proof}

Transposition of Dehn invariants can also equivalently be interpreted in a coordinate-free way, as transposition of tensors, a bilinear operation that maps $x\otimes y$ to $y\otimes x$ for all $x$ and $y$. 
We use transposition to define a symmetric form of the Dehn invariant, its \emph{symmetric part}:

\begin{definition}
For an orthogonal polygon $P$, define its symmetrized Dehn invariant as the average of the Dehn invariant and its transpose:
\[\symdehn(P)=\frac12\bigl(\dehn(P)+\dehn(P)\transpose\bigr).\]
\end{definition}

\begin{observation}
The symmetrized Dehn invariant is an invariant of orthogonal dissection with rotation.
\end{observation}

\begin{proof}
Suppose that $P$ and $P'$ are any two polygons related by an orthogonal dissection with rotation. We can assume without loss of generality, by adding additional subdivisions if necessary, that the dissection consists of subdividing $P$ into rectangles, rotating a subset of these rectangles, and translating them so that they form a subdivision of $P'$. The subdivision and translation steps, and the step in which the translated pieces are glued together to form $P'$, change neither $\dehn$ nor $\dehn\transpose$.
The step of rotating any rectangle acts on the symmetrized Dehn invariant by swapping $\dehn$ with $\dehn\transpose$, leaving their average unchanged.
\end{proof}

\begin{lemma}
\label{lem:symmetrize}
Any orthogonal polygon $P$ can be dissected with rotation into a polygon $\symmetrize P$ for which $\symdehn(P)=\dehn(\symmetrize{P})$.
\end{lemma}

\begin{proof}
Subdivide $P$ into rectangles, cut each rectangle by an axis-parallel line (either horizontal or vertical) into two congruent rectangles, and rotate one of these two rectangles. Glue the results together arbitrarily to form $\symmetrize P$.
\end{proof}

\begin{corollary}
Any two orthogonal polygons with the same symmetrized Dehn invariants have a dissection with rotation into each other.
\end{corollary}

\begin{proof}
Let the symmetrized invariant of both polygons be $\symmetrize{\dehn}$.
Dissect each of the given polygons into a polygon whose Dehn invariant is $\symmetrize{\dehn}$, by \cref{lem:symmetrize}, and apply \cref{thm:dissectability}.
\end{proof}

We note that the effect of symmetrization on rank is not uniform. For instance, if $P$ is a rectangle with incommensurable sides, its Dehn invariant and symmetrized Dehn invariant can be written in matrix form (using the sides as a basis) as
\[
\dehn(P)=\begin{pmatrix} 0 & 0 \\ 1 & 0 \\ \end{pmatrix},\qquad
\symdehn(P)=\begin{pmatrix} 0 & \tfrac12 \\ \tfrac12 & 0\\ \end{pmatrix},
\]
and in this case symmetrization doubles the rank. This is the most possible:

\begin{observation}
\label{obs:sym-2x-rank}
For any orthogonal polygon $P$, $\rank\bigl(\symdehn(P)\bigr)\le2\rank\bigl(\dehn(P)\bigr)$.
\end{observation}

\begin{proof}
This follows immediately from the definition of $\symdehn(P)$ as the sum of two tensors of equal rank, $\tfrac12\dehn(P)$ and its transpose.
\end{proof}

\begin{figure}[t]
\centering\includegraphics[width=0.6\textwidth]{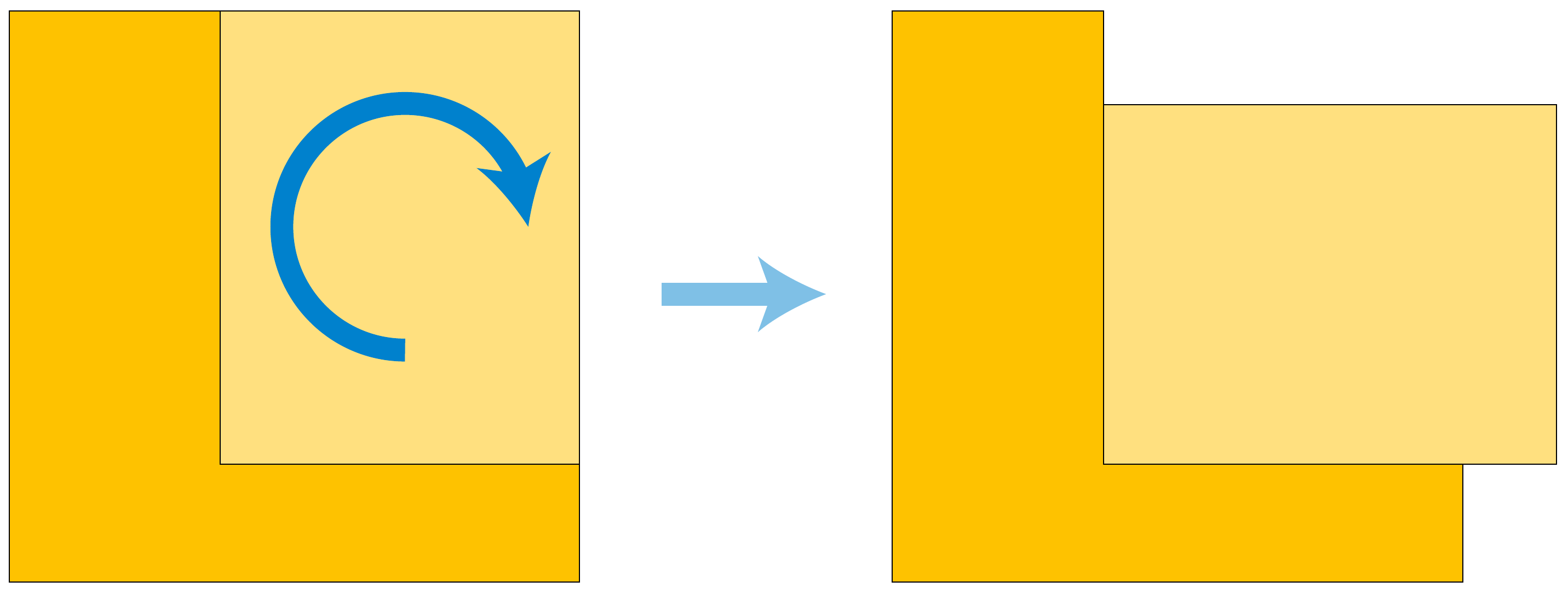}
\caption{Construction of a polygon $P$ (right) for which $\dehn(P)=3\symdehn(P)$.}
\label{fig:sym-reduces-rank}
\end{figure}

On the other hand it is easy to construct a polygon $P$ (\cref{fig:sym-reduces-rank}) whose Dehn invariant and symmetrized Dehn invariant (over an appropriate basis) are
\[
\dehn(P)=\begin{pmatrix} 1 & 0 & 0 \\ 0 & 0 & 1 \\ 0 & -1 & 0 \\ \end{pmatrix},\qquad
\symdehn(P)=\begin{pmatrix} 1 & 0 & 0 \\ 0 & 0 & 0 \\ 0 & 0 & 0 \\ \end{pmatrix}
\]
and in this case symmetrization decreases the rank from three to one. A similar construction can produce examples in which the rank decreases from arbitrarily large values to one. Therefore, the minimum number of rectangles in an orthogonal dissection of a polygon without rotation does not accurately approximate the minimum number of rectangles in a dissection with rotation. 

Returning to finding dissections into a minimum number of rectangles, we have the following formula:

\begin{theorem}
\label{thm:minrect-rot}
Let $P$ be any orthogonal polygon, and let $r$ denote the minimum number of rectangles that can be formed from $P$ by an orthogonal dissection with rotation. Then
\[
r = \min\left\{
\rank(X)\Bigm\vert
X\in\mathbb{R}\otimes_{\mathbb{Q}}\mathbb{R} \textnormal{ and }
 \frac12(X+X\transpose)=\symdehn(P)\right\}.
\]
\end{theorem}

\begin{proof}
Let $R$ be a set of $r$ rectangles obtained from $P$ by orthogonal dissection with rotation, and let $X=\dehn(R)$. Clearly, $r=\rank(X)$, and
\[
\frac12(X+X\transpose)=\symdehn(R)=\symdehn(P).
\]
There can be no other tensor $Y$ with smaller rank and with
\[
\frac12(Y+Y\transpose)=\symdehn(P),
\]
for if there were we could find a realization of $Y$ as a polygon, dissect $P$ into that realization, and then dissect the realization into fewer rectangles.
\end{proof}

Unfortunately we do not know how to efficiently find a minimizing tensor $X$ and compute this minimum number of rectangles. It is a special case of the $\mathsf{NP}$-hard affine rank minimization problem, in which one wishes to find a minimum-rank matrix in a linear subspace of matrices~\cite{JaiMekDhi-NeurIPS-10}, but over rational rather than the more usual real matrices. Instead we have the following approximation algorithm.

\begin{theorem}
We can approximate the minimum number of rectangles into which an orthogonal polygon can be dissected with rotation, in polynomial time, to within an approximation ratio of 2.
\end{theorem}

\begin{proof}
Compute and return $\rank\bigl(\symdehn(P)\bigr)$. Because $\symdehn(P)$ is one of the choices for $X$ in \cref{thm:minrect-rot}, the result is at least equal to the minimum number of rectangles.
For the optimal $X$ of \cref{thm:minrect-rot}, the result we return is within a factor of two of the value obtained from $X$, by \cref{obs:sym-2x-rank}.
\end{proof}

\section{Conclusions}

We have shown that the rank of the orthogonal Dehn invariant of an orthogonal polygon controls the number of rectangles into which it can be dissected by axis-parallel slices and translation, leading to a polynomial time algorithm to compute this number of rectangles or to construct an optimal set of rectangles into which it can be dissected. The dissection itself may require a non-polynomial number of pieces. The number of rectangles, in turn, controls the ability to dissect a polygon into a shape that tiles the plane. Answering a question posed in the conference version of this paper, we also find an approximate extension of these results to dissections that allow $90^\circ$ rotations.

The rank of the polyhedral Dehn invariant, similarly, provides a lower bound on the number of edges of a polyhedron into which a given polyhedron may be dissected, because every polyhedron's Dehn invariant is defined as a sum of tensors over its edges, with rank at most the number of edges in the sum. Because a tetrahedron has six edges, the rank of the polyhedral Dehn invariant, divided by six, also gives a lower bound on the number of tetrahedra into which a given polyhedron may be dissected.
It is tempting to guess that, rather than merely lower-bounding these numbers, the Dehn invariant is a constant-factor approximation both to the minimum number of edges in a single polyhedron resulting from a dissection and to the minimum number of tetrahedra in a dissection into disjoint tetrahedra. However, we have been unable to prove this. What would be needed is a construction of a polyhedron with a given Dehn invariant and with a number of edges proportional to the rank of the invariant, analogous to \cref{lem:realizability}, but this is made more difficult by the fact that not all tensors are realizable as polyhedral Dehn invariants. We do not even have a proof that the minimum number of edges and the minimum number of tetrahedra are within constant factors of each other; there exist polyhedra that cannot be subdivided into a linear number of tetrahedra (or more generally a linear number of convex pieces) relative to their numbers of edges~\cite{PatYao-DCG-90}, but this does not rule out more parsimonious dissections for such examples.

Another natural direction for future research concerns dissections into squares rather than rectangles. Squares have Dehn invariants that are symmetric, of rank one, with positive area. Therefore, for a dissection (without rotation) into squares to exist, the Dehn invariant must be symmetric. Any symmetric Dehn invariant can be decomposed into a sum of rank-one symmetric Dehn invariants, but these might not have positive area. We do not know how to determine when a decomposition into rank-one symmetric positive-area Dehn invariants exists, nor how to minimize the number of terms in such a decomposition.

Order-two tensors have additional invariants beyond their rank, such as those obtained as the coefficients of the characteristic polynomial or as any function of those coefficients. (The rank can be obtained in this way from the difference in degrees of the highest-degree and lowest-degree nonzero coefficients.) Another example is the minimum rank of the tensors with the same symmetric part as a given tensor, as used here for minimizing rectangles under dissection with rotation. Our work naturally raises the questions: which other invariants are meaningful for dissection problems, what do they mean, and how efficiently can they be computed?

\section*{Acknowledgements}

This research was supported in part by NSF grant CCF-2212129. A preliminary announcement of some of these results appeared in the proceedings of the 34th Canadian Conference on Computational Geometry, 2022, pp. 143–150.

\bibliographystyle{plainurl}
\bibliography{dehn}

\end{document}